\tikzstyle{process} = [rectangle, rounded corners, minimum width=5cm, minimum height=1cm, text centered, draw=black, fill=white, align=center, font = \fontsize{9}{11}\selectfont]
\tikzstyle{arrow} = [thick,->,>=stealth]
\newcommand{\Input}{\State \textbf{Input:} }
\newcommand{\Output}{\State \textbf{Output:} }
\newcommand{\be}{\begin{equation}}
\newcommand{\ee}{\end{equation}}
\newcommand{\ba}{\begin{aligned}}
\newcommand{\ea}{\end{aligned}}
\newcommand{\R}{\mathbb{R}}
\newcommand{\bc}{\begin{center}}
\newcommand{\ec}{\end{center}}
\newcommand{\beq}{\begin{equation}}
\newcommand{\eeq}{\end{equation}}
\newcommand{\beqq}{\begin{equation*}}
\newcommand{\eeqq}{\end{equation*}}
\newcommand{\beqa}{\begin{align}}
\newcommand{\eeqa}{\end{align}}
\newcommand{\barr}{\begin{array}}
\newcommand{\earr}{\end{array}}
\newcommand{\bi}{\begin{itemize}}
\newcommand{\ei}{\end{itemize}}
\newcommand{\C}{\mathbb{C}}
\newcommand{\eexp}{\mathrm{eexp}}
\newtheorem{lem}{Lemma}
\newtheorem{theo}{Theorem}
\newtheorem{defi}{Definition}
\newtheorem{algo}{Algorithm}
\DeclareMathOperator{\poly}{poly}
\DeclareMathOperator{\Tr}{Tr}
\DeclareMathOperator{\N}{\mathbb{N}}
\DeclareMathOperator{\I}{\mathbb{I}}
\newcommand{\PSPACE}{\mathsf{PSPACE}}
\newcommand{\NCpoly}{\mathsf{NC}(\poly)}
\newcommand{\PTIME}{\mathsf{PTIME}}
\newcommand{\DSPACE}{\mathsf{DSPACE}}
\newcommand{\NC}{\mathsf{NC}}
\newcommand{\CC}{\mathbb{C}}
\newcommand{\QQ}{\mathbb{Q}}
\newcommand{\calO}{\mathcal{O}}
\begin{document}

\title{Bounding the computational power of bosonic systems}

\author{Varun Upreti}
\email{varun.upreti@inria.fr}
\affiliation{DIENS, \'Ecole Normale Sup\'erieure, PSL University, CNRS, INRIA, 45 rue d’Ulm, Paris, 75005, France}
\author{Dorian Rudolph}
\affiliation{Department of Computer Science and Institute for Photonic Quantum Systems (PhoQS), Paderborn University, Warburger Str. 100, 33098 Paderborn, Germany}
\author{Ulysse Chabaud}
\affiliation{DIENS, \'Ecole Normale Sup\'erieure, PSL University, CNRS, INRIA, 45 rue d’Ulm, Paris, 75005, France}

\begin{abstract}
Bosonic quantum systems operate in an infinite-dimensional Hilbert space, unlike discrete-variable quantum systems. This distinct mathematical structure leads to fundamental differences in quantum information processing, such as an exponentially greater complexity of state tomography \cite{AnnaMele2024} or a factoring algorithm in constant space \cite{brenner2024factoring}. Yet, it remains unclear whether this structural difference of bosonic systems may also translate to a practical computational advantage over finite-dimensional quantum computers. Here we take a step towards answering this question by showing that universal bosonic quantum computations can be simulated in polynomial space (and exponential time) on a classical computer, significantly improving the best previous upper bound requiring exponential memory \cite{chabaud2024complexity}. In complexity-theoretic terms, we improve the best upper bound on \textsf{CVBQP} from \textsf{EXPSPACE} to \textsf{PSPACE}. This result is achieved using a simulation strategy based on finite energy cutoffs and approximate coherent state decompositions. While we propose ways to potentially refine this bound, we also present arguments supporting the plausibility of an exponential computational advantage of bosonic quantum computers over their discrete-variable counterparts. Furthermore, we emphasize the role of circuit energy as a resource and discuss why it may act as the fundamental bottleneck in realizing this advantage in practical implementations.   
\end{abstract}
\maketitle
\section{Introduction}
Quantum advantage has been a key research goal for decades \cite{shor1994algorithms,Aaronson2013,Ronnow2014,Harrow2017,Boixo2018}, with a major focus on identifying the best physical platform for quantum computing \cite{kok2007linear,pogorelov2021compact,xue2021cmos,acharya2024quantum,LogicalLukin}. These platforms differ not only in their physical properties but also in the type of quantum computations they may perform. While systems like neutral atoms and quantum dots operate in finite-dimensional Hilbert spaces, performing discrete-variable (DV) computations over qubits or qudits, bosonic platforms such as photonics, vibrational degrees of freedom of trapped-ion systems, or superconducting cavity modes, are described by infinite-dimensional Hilbert spaces, and may perform continuous-variable (CV) computations over qumodes or simply modes, dealing with operators taking on a continuous spectrum of values, like position and momentum.

Bosonic platforms have emerged as promising candidates for quantum computing \cite{Gottesman2001,Knill2001,menicucci2006universal,madsen2022quantum}, driven by experimental breakthroughs such as the generation of large-scale entangled states \cite{yokoyama2013ultra} and remarkable error correction capabilities \cite{sivak2023}. With recent results such as the exponential complexity of learning the state of CV quantum systems \cite{AnnaMele2024} and a factoring algorithm using a constant number of bosonic modes \cite{brenner2024factoring} suggesting that the unique mathematical structure of bosonic systems leads to novel properties, this raises a compelling question: \textit{can the infinite-dimensional Hilbert space inherent to these platforms lead to a computational advantage over DV quantum computers?}

The above results support a positive answer to the question. This would refute the quantum-extended Church--Turing thesis \cite{kaye2007introduction}, which asserts that any efficient physical computation can be efficiently simulated by a DV quantum computer, not only reshaping our understanding of quantum computation but also establishing bosonic platforms as foundational to future quantum technologies. 
On the other hand, other results suggest that the CV and DV paradigms may not differ fundamentally in computational power \cite{arzani2025}. Indeed, one reason why bosonic systems might challenge the quantum-extended Church--Turing thesis is their theoretically unbounded energy, but any practical bosonic quantum computer operates with finite energy. That being said, a negative result to the question would also be highly valuable, as it could lead to the development of quantum algorithms in finite-dimensional Hilbert spaces for simulating bosonic dynamics.

\begin{figure}
    \centering
    \includegraphics[width=\linewidth]{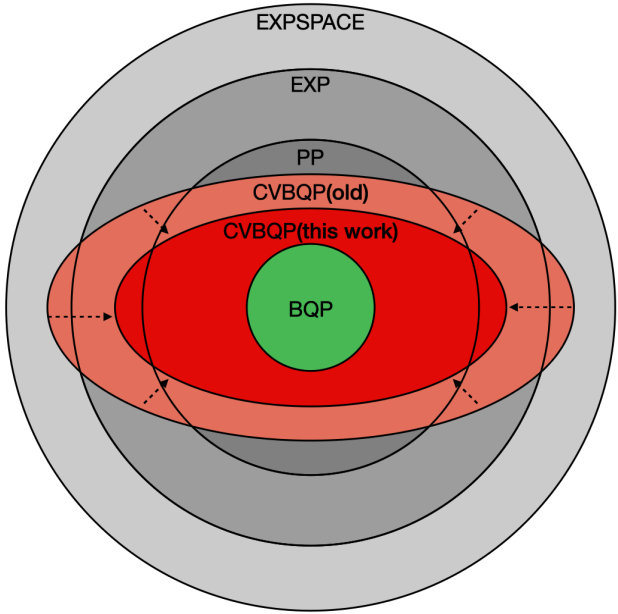}
    \caption{Relationship between several classical and quantum complexity classes (see \hyperref[sec:methods]{Methods}). We improve the upper bound on \textsf{CVBQP} from \textsf{EXPSPACE} \cite{chabaud2024complexity} to \textsf{PSPACE}.}
    \label{fig:complexity_class}
\end{figure}
To characterize the computational power of CV and DV quantum systems, we need to examine the type of problems each can solve efficiently. This is achieved using complexity theory, the branch of theoretical computer science that studies the resources---such as time, space, and energy---required to solve computational problems, aiming to classify problems into complexity classes based on their inherent difficulty \cite{papadimitriou1994computational,Aaronson2013book,chabaud2024complexity}. Quantum complexity theory for DV systems is well-established: for instance, the class of problems solvable by a DV quantum computer in polynomial time with high probability, termed \textsf{BQP}, is a subset of the complexity class \textsf{PP} \cite{Adleman1997}, where $\textsf{PP}$ encompasses problems solvable by a probabilistic classical computer in polynomial time with a success rate greater than 50\%. Note that there is likely a large separation between the two classes, but proving unconditional separation between computational complexity classes is often a daunting task.

On the other hand, complexity theory for bosonic systems is still in its early stages, with the main challenge being the definition of a reasonable complexity class for problems solvable by a bosonic quantum computer---one that is neither overly optimistic nor too restrictive (for example, by simply restricting bosonic computations to DV computations). In a recent attempt to establish the foundations for a bosonic quantum computational complexity theory in \cite{chabaud2024complexity}, the authors introduce the complexity class \textsf{CVBQP}, which represents problems solvable by bosonic circuits using a standard gate set \cite{lloyd1999quantum,Sefi2011} in polynomial time\footnote{The authors denote this class as $\textsf{CVBQP}[X^3]$, but for brevity, we refer to it here as $\textsf{CVBQP}$.}. In particular, they show the inclusion $\textsf{CVBQP} \subseteq \textsf{EXPSPACE}$, where \textsf{EXPSPACE} encompasses problems solvable by a classical computer using exponential memory, without strict time limitations. \textsf{EXPSPACE} strictly includes \textsf{PP} and thus \textsf{BQP}, and contains problems so complex that they demand an astronomical amount of memory, far beyond the capacity of any physical computer.

Here we improve on this result by showing that continuous-variable quantum computations can be simulated in polynomial space on a classical computer, namely $\textsf{CVBQP}\subseteq \textsf{PSPACE}$. This result brings the computational power of bosonic systems significantly closer to that of DV quantum systems (see Figure \ref{fig:complexity_class}), thus limiting the potential computational advantage of CV quantum systems over their DV counterparts.

The rest of the paper is organized as follows. Section \ref{sec:results} demonstrates that bosonic computations can be simulated in polynomial space, presenting a step-by-step derivation along with discussions of the physical significance of each step. We conclude this section by highlighting the key techniques that allow us to establish a stricter upper bound for \textsf{CVBQP} than previous work, as well as the barriers to improving this bound. Section \ref{sec:conclusion} discusses the broader implications of our findings, highlights proof techniques that may be of independent interest, and outlines directions for future research. Finally, Section \ref{sec:methods} introduces the additional technical tools and definitions required for our proofs.

\section{Results}\label{sec:results}

This section presents our main result, Theorem \ref{theo:main_result}, along with the intermediate results that contribute to its proof (see section \ref{sec:main}). We discuss their physical significance and provide a step-by-step explanation of how they fit together. An outline of the proof is given in Figure \ref{fig:proof_sketch}. We discuss potential limitations for using our proof techniques to improve the simulation of bosonic computations in section \ref{sec:PSPACE}. We first provide a few preliminary notations and definitions in section \ref{sec:preli} hereafter.

\subsection{Preliminaries}
\label{sec:preli}

We refer the reader to \cite{NielsenChuang} for background on quantum information theory and to \cite{Braunstein2005,ferraro2005gaussian,Weedbrook2012} for CV quantum information material. Hereafter, the sets $\N, \R$ and $\C$ are the set of natural, real, and complex numbers respectively, with a * exponent when $0$ is removed from the set.

In CV quantum information, a mode refers to a degree of freedom associated with a specific quantum field of a CV quantum system, such as a single spatial or frequency mode of light, and is the equivalent of a qubit in the CV regime. In this paper, $m \in \N^*$ denotes the number of modes in the system. We are concerned with the scaling of the complexity of simulation with respect to input number of modes $m$, and we use the following scaling notations for brevity: $\poly\equiv\mathcal \calO(\poly(m))$, $\exp\equiv\mathcal \calO(\exp(\poly(m)))$, and $\eexp\equiv\mathcal \calO(\exp(\exp(\poly(m))))$, whereas $1/\poly \equiv \mathcal{O}(1/\poly)$, and similarly for $1/\exp$ and $1/\eexp$.

Hereafter, $\{\ket{n} \}_{n\in\N}$ denote the single-mode Fock basis, with $\ket0$ being the vacuum state, and $\hat{a}$ and $\hat{a}^\dagger$ refer to the single-mode annihilation and creation operator, respectively, satisfying $[\hat{a},\hat{a}^\dagger] = \I$. These are related to the position and momentum quadrature operators as
\begin{equation}
    \hat{q} = \frac{1}{\sqrt{2}}(\hat{a} + \hat{a}^\dagger), \hspace{5mm}\hat{p} = \frac{-i}{\sqrt{2}}(\hat{a} - \hat{a}^\dagger),
\end{equation} 
with the convention $\hbar = 1$. Furthermore, $\hat{q}$ and $\hat{p}$ satisfy the commutation relation $[\hat{q},\hat{p}] = i\I$. The particle number operator is given as $\hat{n} = \hat{a}^\dagger \hat{a}$.

Products of unitary operations generated by Hamiltonians that are quadratic in the quadrature operators of the modes are called Gaussian unitary operations, and states produced by applying a Gaussian unitary operation to the vacuum state are Gaussian states. The action of an $m$-mode Gaussian unitary operation $\hat{G}$ on the vector of quadratures $\boldsymbol{\Gamma}= [\hat{q}_1,\dots,\hat{q}_m,\hat{p}_1,\dots,\hat{p}_m]$ is given by
\begin{equation}\label{eqn:symplectic_transform_quadrature}
    \hat{G}^\dagger \boldsymbol{\Gamma} \hat{G} = S \boldsymbol{\Gamma} + \boldsymbol{d},
\end{equation}
where $S$ is a $2m \times 2m$ symplectic matrix and $\boldsymbol{d} \in \R^{2m}$ is a displacement vector.

Passive linear unitary gates, such as beam splitters and phase shifters, are Gaussian transformations that preserve the total particle number. Single-mode displacement operators and single-mode squeezing operators are defined as $\hat{D}(\alpha) = e^{\alpha \hat{a}^\dagger - \alpha^* \hat{a}}$ and $\hat{S} (\xi) = e^{\frac12 (\xi \hat{a}^2 - \xi^* \hat{a}^{\dagger 2})}$ respectively, where $\alpha, \xi \in \C$. A coherent state $\ket{\alpha}$ is generated by the action of displacement operator on the vacuum state $\ket{\alpha} = \hat{D}(\alpha)\ket{0}$, and is the eigenstate of the annihilation operator, i.e., $\hat{a}\ket{\alpha} = \alpha \ket{\alpha}$, $\forall \alpha \in \C$. In what follows, displaced passive linear unitary gates refer to passive linear unitary gates multiplied with a tensor product of single-mode displacement operators. A crucial property of (displaced) passive linear unitary gates which we will use is that they transform tensor products of coherent states to tensor products of coherent states. The Bloch--Messiah decomposition \cite{ferraro2005gaussian} states that any multimode Gaussian unitary gate can be decomposed into a product of orthogonal gates, rotation (or phase-shift) gates, displacement gates and shearing gates.

Non-Gaussian operations are necessary for enabling quantum advantage since Gaussian gates acting on Gaussian states and followed by Gaussian measurements can be classically simulated efficiently \cite{Bartlett2002}. One prominent example of a non-Gaussian gate is the cubic phase gate $e^{i\gamma\hat{q}^3}$ \cite{Gottesman2001}, whose action on the quadratures is given by \cite{Budinger2024}
\begin{equation}
    e^{-i\gamma\hat{q}^3} \hat{q} e^{i\gamma\hat{q}^3}   = \hat{q}, \hspace{5mm} e^{-i\gamma\hat{q}^3}  \hat{p}  e^{i\gamma\hat{q}^3}    = \hat{p} + 3\gamma \hat{q}^2.
\end{equation}
A standard model of CV quantum computation (CVQC) is defined by a vacuum state input into a circuit with Gaussian unitaries and cubic phase gates \cite{lloyd1999quantum,Sefi2011,arzani2025}.

Complexity classes (see \hyperref[sec:methods]{Methods} and \cite{papadimitriou1994computational,Aaronson2013book,chabaud2024complexity}) categorize the difficulty of problems based on their resource requirements, providing a framework for comparing the computational power of DV and CV systems by analyzing the complexity of the problems they can solve.
A complete problem for a complexity class is a problem that is both a member of the class and at least as hard as every other problem within the class. Specifically, any problem in the class can be reduced to it in polynomial time (or according to the appropriate reduction for the class). Consequently, if one can solve a complete problem in a class, one can solve any other problem within that class as well.
The complexity class \textsf{CVBQP} comprises problems that can be solved in polynomial time by a bosonic quantum computer using Gaussian and cubic phase gates on an input vacuum state, together with particle number measurements. The complete problem for the \textsf{CVBQP} complexity class can be stated as follows: Given two predefined boson number regions, determine with high probability in which of these regions the number of bosons in the first mode of a bosonic quantum computer—equipped with a polynomial number of Gaussian and cubic phase gates—lies with high probability, assuming it belongs to one of the two regions.

Finally, in this work we use three different notions of closeness of quantum states in the paper: trace distance ($D$), fidelity ($F$) and $2$-norm distance. Definitions and relationships between them are given in \hyperref[sec:methods]{Methods}.

\subsection{Main result}
\label{sec:main}

With the necessary background established, we now present the main result of the paper, which substantially refines the computational complexity of simulating bosonic computations with Gaussian unitaries and cubic phase gates, bringing it closer to that of discrete-variable quantum systems. This result is formalized as follows:

\begin{theo}\label{theo:main_result} Continuous-variable quantum computations can be simulated in polynomial space on a classical computer. Formally: 
\begin{equation}
    \mathsf{CVBQP}\subseteq \mathsf{PSPACE}.
\end{equation}
\end{theo}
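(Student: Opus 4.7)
The plan is to simulate the bosonic circuit by propagating an approximate decomposition of the evolving state as a linear combination of tensor products of coherent states, with coefficients and parameters that admit a polynomial-space description, and then to read off the desired measurement probability by aggregating over this combination. The core observation is that if the output state can be written as $\sum_j c_j \ket{\boldsymbol{\alpha}_j}$ with $\exp(\poly)$ terms, and if each $c_j$ and each $\boldsymbol{\alpha}_j$ is individually computable in polynomial space, then the probability that the particle number of the first mode lies in a prescribed region is an exponentially large sum of overlaps $\langle \mathbf{n}|\boldsymbol{\alpha}_j\rangle$, which have an explicit Gaussian-times-polynomial closed form and are themselves polynomial-space computable. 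Since summing exponentially many polynomial-space computable quantities can be done in $\mathsf{PSPACE}$, this strategy yields the inclusion.

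The first ingredient is a finite energy cutoff. Using moment bounds on $\hat n$ propagated through the $\poly$-size circuit, with vacuum input and Gaussian plus cubic phase gates, one shows that truncating the Fock support above some $N = \exp(\poly)$ incurs only $1/\exp$ trace-distance error on the output state, which is harmless relative to the $1/\poly$ gap required by the $\mathsf{CVBQP}$ promise. Working on the energy-truncated subspace also makes the otherwise unbounded operators $\hat q^3$ and $\hat S(\xi)$ effectively bounded, which is what will enable a controlled approximation of their action by a finite linear combination of displacement operators applied to coherent states.

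The second ingredient is to show that each allowed gate preserves, up to controlled error, the form ``linear combination of tensor products of coherent states.'' Displaced passive linear gates map tensor products of coherent states exactly to tensor products of coherent states, via the explicit symplectic action on their amplitudes. For the remaining primitives---single-mode squeezing, which arises from the Bloch--Messiah decomposition of a general Gaussian gate, and the cubic phase gate---I would use an integral representation of $\hat S(\xi)\ket{\alpha}$ and $e^{i\gamma\hat q^3}\ket{\alpha}$ as a superposition of coherent states; after restricting the integral to the phase-space region compatible with the energy cutoff and discretizing on a grid of spacing $1/\poly$, each such gate replaces a single coherent state by an $\exp(\poly)$-sized combination of coherent states whose parameters and weights are computable in polynomial space from the input data. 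Iterating over the $\poly$ gates in the circuit multiplies the number of branches by an exponential per gate but keeps every branch describable by polynomially many bits.

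I expect the main obstacle to be error control. The magnitudes of the coefficients $c_j$ and of the displacements $\boldsymbol{\alpha}_j$ can grow rapidly with successive approximations, and one must simultaneously choose the energy cutoff $N$, the grid spacing, and the integration truncation so that (i) the accumulated approximation error after $\poly$ gates remains below the promise gap $1/\poly$ in trace distance or fidelity, and (ii) all intermediate numerical objects stay of at most exponential magnitude so that they still admit polynomial-bit descriptions. Designing and analysing these parameters in a circuit-uniform way---and in particular handling the strongly non-Gaussian $e^{i\gamma\hat q^3}$ without blowing the bit complexity past $\exp$---is where the heart of the proof would lie.
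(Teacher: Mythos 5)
Your overall strategy (branch the evolving state into an exponentially large superposition of coherent states, then aggregate exponentially many closed-form overlaps) is indeed the skeleton of the paper's argument, but the step you treat as the ``core observation'' is precisely where the gap lies. You assert that each coefficient $c_j$ and displacement $\boldsymbol{\alpha}_j$ is ``individually computable in polynomial space,'' so that the exponential sum lands in $\mathsf{PSPACE}$. This is not established, and in the natural implementation it fails: the per-branch coefficients involve quantities such as $\sqrt{n!}\,a_n/\tilde\epsilon^{\,n}$ with Fock cutoff $n=\exp$, and the Fock amplitudes $a_n=\bra{n}e^{i\gamma\hat q^3}\ket\alpha$ are obtained from Airy-type series of exponential degree whose individual terms have doubly exponential magnitude and must cancel. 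Obtaining the coefficient to the needed $1/\exp$ accuracy therefore requires intermediate arithmetic on numbers with exponentially many bits, which cannot even be written down in polynomial space, let alone summed naively. The paper resolves this not by per-branch poly-space computability but by invoking the characterization $\mathsf{PSPACE}=$ parallel polynomial time (PRAM / $\NCpoly$), together with $\NC$ algorithms for high-precision arithmetic, polynomial evaluation, $\Gamma$, and $e^z$, so that all branches and their doubly-exponential-precision coefficients are computed in polynomial \emph{parallel} time; this complexity-theoretic component is essential and is missing from your plan.

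Two further points where your route diverges and would need real work. First, you keep squeezing inside the circuit and propose to decompose $\hat S(\xi)\ket\alpha$ (and $e^{i\gamma\hat q^3}\ket\alpha$) via a discretized integral representation; the paper instead proves a circuit-reduction lemma converting every interior Gaussian into a displaced passive linear unitary (absorbing the scaling into the cubicity), so that only $e^{i\gamma\hat q^3}\ket\alpha$ ever needs a coherent-state decomposition and all squeezing is deferred to one final Gaussian acting on the superposition. Your variant is not obviously wrong, but the grid-spacing/error analysis you defer is exactly the heart of the matter, and a spacing of $1/\poly$ is not justified. Second, approximating each branch only up to trace distance (i.e., up to a global phase) is insufficient: inside a superposition these phases become relative phases and can wreck the approximation. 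The paper explicitly upgrades the per-term guarantee to $2$-norm closeness (fixing a global phase) and then uses a $\sqrt{N}$-type bound on the trace distance of superpositions to control the accumulated error over $\poly$ layers; your proposal would need the analogous strengthening to go through.
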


\noindent Here recall that \textsf{PSPACE} is the class of problems solvable by a classical computer in exponential time using polynomial memory and $\textsf{CVBQP}$ represents problems solvable by a bosonic quantum computer with input vacuum, Gaussian unitary gates and cubic phase gates, and particle number measurements in polynomial time. 

Theorem \ref{theo:main_result} shows that problems solvable by bosonic quantum computers will at most be exponentially hard as compared to the ones solvable by DV quantum computers. This draws a parallel to the existing result that the tomography of bosonic systems is exponentially hard as compared to the tomography of DV systems \cite{AnnaMele2024}. While this result already significantly narrows the potential gap between the computational power of DV and CV quantum systems, it does not rule out the possibility of a tighter bound. However, with our current methods, it remains unclear how such an improvement could be achieved. We elaborate the last point at the end of the section.

In what follows, we give a step-by-step overview of the proof of Theorem \ref{theo:main_result} (see Figure \ref{fig:proof_sketch}).

\subsection{A complete problem for \textsf{CVBQP}}

Firstly, we define a simple problem of probability estimation at the output of a specific class of bosonic circuits:
\begin{defi}[\textsc{PassiveBosonNumEst}]\label{defi:prob_passive}
     Given an $m$ mode vacuum state $\ket{0}^{\otimes m}$ evolving through a unitary circuit
   \begin{equation}\label{eqn:U_t_analysis}
    \hat{U}_t = \hat{G} e^{i\gamma_t\hat{q}_1^3} \hat{V}_{t-1} \dots \hat{V}_1 e^{i\gamma_1\hat{q}_1^3}\hat{V}_0,
\end{equation}
where $\hat{V}_0,\dots,\hat{V}_{t-1}$ are displaced passive linear unitary gates, $\hat{G}$ is a Gaussian unitary gate, $t = \poly$, and $\gamma$ and the elements of the symplectic matrix and displacement vector corresponding to $\hat{G}$ and $\hat{V}_i$ $\forall i$ are specified upto $m$ bits of precision, i.e. their magnitude is $\leq \exp$, give the probability of number measurement on the first mode
\begin{equation}\label{eqn:prob_to_calculate}
    P(n) = \Tr[\ket{n}\!\bra{n}\otimes \I_{m-1} \hat{U}_t \ket{0}\! \bra{0}^{\otimes m} \hat{U}_t^\dagger]
\end{equation}
up to additive precision $1/\poly$ for $n = \poly$.
\end{defi}

\noindent We show that this problem in fact captures the complexity of $\textsf{CVBQP}$ computations:

\begin{restatable}{theo}{thmCVBQPcomplete}\label{theo:CVBQP_complete_desc}
    \hyperref[defi:prob_passive]{\textsc{PassiveBosonNumEst}} is $\mathsf{CVBQP}$-complete.
\end{restatable}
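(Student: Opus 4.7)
The plan is to prove completeness by handling membership and hardness separately. For membership of \textsc{PassiveBosonNumEst} in $\mathsf{CVBQP}$, the observation is direct: the circuit $\hat{U}_t$ uses only Gaussian unitaries (displaced passive linear plus a final Gaussian) and polynomially many cubic phase gates on a vacuum input, and the task is to estimate a number-measurement probability on the first mode. A bosonic quantum computer therefore implements $\hat{U}_t$ natively and estimates $P(n)$ by sampling, with additive precision $1/\poly$ using $\poly(m)$ repetitions via a Chernoff bound.

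The harder direction is $\mathsf{CVBQP}$-hardness. The plan is to reduce the natural complete problem for $\mathsf{CVBQP}$---estimating the boson-number measurement probability on mode 1 at the output of an arbitrary $\mathsf{CVBQP}$ circuit $U = G_t \cdot e^{i\gamma_t \hat{q}_{k_t}^3} \cdots e^{i\gamma_1 \hat{q}_{k_1}^3} \cdot G_0$, with Gaussians $G_j$ and cubic phases on arbitrary modes $k_j$---to \textsc{PassiveBosonNumEst} by rewriting $U$ in the canonical form of Definition \ref{defi:prob_passive}. First, localize every cubic phase gate to mode 1 by writing $e^{i\gamma_j \hat{q}_{k_j}^3} = \hat{P}_{1,k_j}^\dagger \cdot e^{i\gamma_j \hat{q}_1^3} \cdot \hat{P}_{1,k_j}$ for the passive linear mode swap $\hat{P}_{1,k_j}$, and absorb these swaps into the adjacent Gaussians. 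Next, apply the Bloch--Messiah decomposition to each intermediate Gaussian $\tilde{G}_j$, obtaining $\tilde{G}_j = V_j^{(L)} D_j V_j^{(R)}$ with $V_j^{(L)}, V_j^{(R)}$ displaced passive linear and $D_j$ a tensor product of single-mode real squeezings. Finally, commute each $D_j$ leftward past every subsequent cubic phase gate using the identity $\hat{S}_1(r) \cdot e^{i\gamma \hat{q}_1^3} = e^{i\gamma e^{-3r} \hat{q}_1^3} \cdot \hat{S}_1(r)$ and trivial commutation for squeezings on modes $k \neq 1$; the accumulated squeezings then merge with $\tilde{G}_t$ into the final Gaussian $\hat{G}$, while the remaining displaced passive linear parts between cubic phases become the $\hat{V}_j$.

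A short precision analysis confirms that the modified strengths $\gamma_j'$ and the parameters of all $\hat{V}_j$ stay within the $\exp$ magnitude bound and are representable with $m$ bits, since each squeezing contributes only a polynomial number of extra bits via the factor $e^{\pm 3r}$.

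The main obstacle is the last step: each $D_j$ is nested between two non-commuting displaced passive linear operations $V_j^{(L)}, V_j^{(R)}$, so ``extracting'' $D_j$ for commutation through cubic phases is not immediate. Overcoming this will require a finer treatment---for instance, an iterated Bloch--Messiah or a polar/Iwasawa decomposition of each symplectic matrix that cleanly separates the positive-definite (squeezing) and orthogonal (passive linear) factors, together with careful tracking of parameter magnitudes to ensure the blowup remains polynomial. A safety net is to allow the reduction to introduce auxiliary (possibly vanishing-strength) cubic phase gates on mode 1 to ``transport'' squeezings past passive linear obstructions, at the cost of a polynomial increase in $t$.
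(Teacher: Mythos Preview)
Your commutation plan for hardness runs into exactly the obstacle you flag, and the workarounds you sketch do not close it. Bloch--Messiah gives $\tilde G_j = V_j^{(L)} D_j V_j^{(R)}$, but to push $D_j$ past the next cubic phase you must first push it through $V_j^{(L)}$; conjugating a tensor of single-mode squeezings by a generic passive linear unitary does not return a diagonal squeezing, so the clean rule $\hat S_1(r)\,e^{i\gamma\hat q_1^3}=e^{i\gamma e^{-3r}\hat q_1^3}\hat S_1(r)$ is never applicable. Iterated Bloch--Messiah or an Iwasawa factorization merely reshuffles the same obstruction. Your ``safety net'' of auxiliary cubic phases can be made to work, but only if you actually \emph{decompose} each squeezing into rotations plus cubic phases \`a la Sefi--van~Loock; that is a different, heavier reduction than the one you describe, costing $\Theta(m)$ extra cubic gates per Gaussian, and you have not spelled it out.

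The paper bypasses all of this with a single observation (its Lemma~\ref{lem:passive_reduction}): write $\hat G^\dagger e^{i\gamma\hat q_1^3}\hat G = e^{i\gamma(\hat G^\dagger\hat q_1\hat G)^3}$ and note that $\hat G^\dagger\hat q_1\hat G = \sum_k \bigl(S_{q_1,q_k}\hat q_k + S_{q_1,p_k}\hat p_k\bigr) + d_1$ is just an affine combination of quadratures. A displacement removes $d_1$, single-mode rotations align each mode's contribution onto its $\hat q_k$, and an orthogonal interferometer maps the resulting position-quadrature combination to $\|(S)_1\|_2\,\hat q_1$. Hence $\hat G^\dagger e^{i\gamma\hat q_1^3}\hat G = \hat B^\dagger e^{i\gamma'\hat q_1^3}\hat B$ with $\hat B$ displaced passive linear and $\gamma' = \gamma\,\|(S)_1\|_2^3$---no squeezing to extract or transport. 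Inserting identities so that each cubic phase is conjugated by the accumulated Gaussian $\hat{\mathcal G}_j := \hat G_j\cdots\hat G_0$ and applying this lemma $t$ times yields the canonical form of Definition~\ref{defi:prob_passive} directly, with a single residual Gaussian at the output. This is the missing key idea in your proposal.
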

\noindent The proof of Theorem \ref{theo:CVBQP_complete_desc} is given in Appendix \ref{appendixsec:CVBQP_complete_desc}, and involves mapping the definition of the $\textsf{CVBQP}$-complete problem given in \cite[Definition 4.5]{chabaud2024complexity} in terms of estimating single number measurement probability in a bosonic circuit with Gaussian unitary gates and cubic phase gates up to the given precision. Then, Lemma \ref{lem:passive_reduction} below allows us to replace the generic Gaussian unitary gates in the circuit with displaced passive linear unitary gates, leaving only a final Gaussian unitary gate, and allowing us to describe the circuit of the \textsf{CVBQP}-complete problem by the unitary $\hat{U}_t$ given by Eq.~(\ref{eqn:U_t_analysis}), thus completing the proof.
\begin{restatable}[Gaussian and cubic circuit reduction]{lem}{lemPassiveReduction}\label{lem:passive_reduction}
    Given a single-mode cubic gate $e^{i\gamma \hat{q}_1^3}$ and an $m$-mode Gaussian unitary $\hat{G}$, there exist an $m$-mode displaced passive linear unitary $\hat B$ and a cubicity parameter $\gamma'\in\mathbb R$ such that
    \begin{equation}
        \hat{G}^\dagger e^{i\gamma \hat{q}_1^3} \hat{G} = \hat{B}^\dagger e^{i\gamma' \hat{q}_1^3} \hat{B}.
    \end{equation}
    Moreover, $\hat{B}$ and $\gamma'$ can be computed efficiently given the descriptions of $\hat{G}$ and $\gamma$.
\end{restatable}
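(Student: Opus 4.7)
The plan is to reduce the identity to a statement about how $\hat{q}_1$ transforms: since $\hat{q}_1$ is self-adjoint, conjugation commutes with the exponential, so $\hat{G}^\dagger e^{i\gamma\hat{q}_1^3}\hat{G} = e^{i\gamma(\hat{G}^\dagger\hat{q}_1\hat{G})^3}$ and similarly for $\hat{B}$. It therefore suffices to find $\hat{B}$ and $\gamma'$ such that $\gamma'(\hat{B}^\dagger\hat{q}_1\hat{B})^3 = \gamma(\hat{G}^\dagger\hat{q}_1\hat{G})^3$ as an operator identity. By Eq.~(\ref{eqn:symplectic_transform_quadrature}), $\hat{G}^\dagger\hat{q}_1\hat{G} = \vec{v}^{\,T}\boldsymbol{\Gamma}+c$, where $\vec{v}\in\mathbb{R}^{2m}$ is the first row of the symplectic matrix of $\hat{G}$ and $c$ is the first component of its displacement vector.

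Next, I would establish the key structural fact that for any passive linear unitary $\hat{B}_0$ with mode matrix $U = X+iY\in U(m)$ (with $X,Y$ real), a direct computation using $\hat{a}_k = (\hat{q}_k+i\hat{p}_k)/\sqrt{2}$ yields $\hat{B}_0^\dagger\hat{q}_1\hat{B}_0 = \sum_k(X_{1k}\hat{q}_k - Y_{1k}\hat{p}_k)$, so that its coefficient vector in $\mathbb{R}^{2m}$ has Euclidean norm $\sum_k|X_{1k}+iY_{1k}|^2 = 1$; and conversely, any unit vector in $\mathbb{R}^{2m}$ determines a valid first row of such a $U$, which can be completed to a full unitary by Gram--Schmidt. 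Setting $\lambda := \|\vec{v}\|_2$ and $\gamma' := \gamma\lambda^3$, I would construct the passive linear unitary $\hat{B}_0$ for which $\hat{B}_0^\dagger\hat{q}_1\hat{B}_0 = \vec{v}^{\,T}\boldsymbol{\Gamma}/\lambda$, and then prepend a single-mode displacement on mode 1 of real amplitude $c/(\sqrt{2}\lambda)$ to add the constant shift $c/\lambda$. The resulting displaced passive linear unitary $\hat{B}$ then satisfies $\hat{B}^\dagger\hat{q}_1\hat{B} = \hat{G}^\dagger\hat{q}_1\hat{G}/\lambda$, yielding the desired identity after cubing both sides and multiplying by $\gamma'$.

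As for efficiency: extracting $\vec{v}$ and $c$ from the given description of $\hat{G}$ is immediate, $\lambda$ requires only a single square root computed to polynomial precision, and Gram--Schmidt on an $m\times m$ matrix runs in polynomial time. The main obstacle I anticipate is not conceptual but careful bit-precision bookkeeping: since the entries of the symplectic matrix may have magnitude up to $\exp$, $\lambda$ and $\gamma' = \gamma\lambda^3$ may also be exponentially large, yet their descriptions remain of polynomial size, consistent with the precision assumptions of Definition \ref{defi:prob_passive}.
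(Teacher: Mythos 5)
Your proposal is correct and follows essentially the same route as the paper's proof: reduce the identity to the transformation of $\hat{q}_1$, write $\hat{G}^\dagger \hat{q}_1 \hat{G}$ as an affine combination of quadratures, realize the normalized linear part by a passive linear unitary and the constant by a displacement, and absorb $\lambda^3=\|\vec{v}\|_2^3$ into the cubicity $\gamma'=\gamma\lambda^3$. The only difference is cosmetic: you obtain the passive unitary by characterizing achievable unit-norm coefficient vectors and completing the row to a mode unitary via Gram--Schmidt, whereas the paper constructs it explicitly as per-mode phase rotations followed by an orthogonal mixing of the position quadratures.
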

\noindent The proof of Lemma \ref{lem:passive_reduction} is given in Appendix \ref{appendixsec:proof_lem_passive_reduction} and is based on the observation that the symplectic action of $G$ on the position quadrature $\hat{q}_1$ (in $e^{i\gamma\hat{q}_1^3}$) can be mapped to a rescaling of $\hat{q}_1$ by adding displaced passive linear unitaries on both sides. The scaling constant is then absorbed into the cubicity.

Note that Lemma \ref{lem:passive_reduction} is of independent interest, as it shows that most of the squeezing present in Gaussian unitary gates (which can be identified through their Bloch--Messiah decomposition) within a circuit of Gaussian unitaries and cubic phase gates can be eliminated without increasing the number of cubic phase gates. This is achieved by transforming a Gaussian unitary gate (which may include squeezing) into a displaced passive linear unitary (without squeezing) by modifying the cubicity, with the only cost being the addition of a final Gaussian unitary at the end. This result improves upon \cite{Sefi2011}, where squeezing in an $m$-mode Gaussian unitary gate is removed by first decomposing it into a passive linear unitary, displacement gates, and $m$ squeezing gates via Bloch-Messiah decomposition \cite{ferraro2005gaussian}. Each squeezing gate is then expressed as a combination of a rotation and a shearing gate \cite{chabaud2021holomorphic}, with the shearing gate further decomposed into displacement gates and two cubic phase gates \cite{Sefi2011}. This process requires adding $2m$ extra cubic phase gates per Gaussian unitary, totaling $2m+t$ for $t$ such gates. In contrast, Lemma \ref{lem:passive_reduction} eliminates squeezing while keeping the number of cubic phase gates fixed at $t$.

\subsection{A classical algorithm for \hyperref[defi:prob_passive]{\textsc{PassiveBosonNumEst}}}

The following theorem gives an upper bound on the computational complexity of \hyperref[defi:prob_passive]{\textsc{PassiveBosonNumEst}}:
 \begin{theo}\label{theo:complexity_prob}
     \hyperref[defi:prob_passive]{\textsc{PassiveBosonNumEst}} $\in\mathsf{PSPACE}$.
 \end{theo}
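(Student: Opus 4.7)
The plan is to show that $P(n)$ can be approximated to $1/\poly$ additive precision as a sum of at most $\exp$-many terms, each computable in polynomial space; since PSPACE is closed under summation of exponentially many PSPACE-computable summands, this places the problem in PSPACE. The structural property that enables the whole approach is that between any two consecutive cubic phase gates, the circuit consists only of displaced passive linear unitaries, which map tensor products of coherent states into tensor products of coherent states. Therefore, if at each cubic phase gate I insert a truncated and discretised coherent-state resolution of identity, the evolved state is, at every depth of the circuit, a superposition of coherent-state tensor products whose amplitudes evolve classically through the passive linear layers.

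Concretely, I would proceed as follows. First, propagate the vacuum $\ket{0}^{\otimes m}$ through $\hat V_0$ symplectically to obtain a coherent-state tensor product. Then, at each cubic gate $e^{i\gamma_j \hat q_1^3}$, decompose the first-mode state using the overcomplete identity $\I = \frac{1}{\pi}\int d^2\beta\,\ket{\beta}\!\bra{\beta}$, replace the integral by a finite Riemann sum over a ball of phase-space radius $R = \exp$, and evaluate each overlap $\langle\beta| e^{i\gamma_j \hat q_1^3}|\alpha\rangle$ via an explicit Airy-type one-dimensional integral computable to $1/\exp$ precision in polynomial space. After the $t=\poly$ cubic gates, the final Gaussian matrix element $\langle n,n_2,\dots,n_m|\hat G|\vec\alpha\rangle$ is given in closed form using the Bloch--Messiah decomposition of $\hat G$ together with standard formulas for Fock-state overlaps with displaced squeezed states. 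Finally, the measurement probability $P(n) = \sum_{n_2,\dots,n_m} |\cdot|^2$ is obtained by truncating the Fock-index sum at an $\exp$ cutoff justified by a Markov-type bound on the total output energy.

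Each summand of the resulting expression is a product of $t=\poly$ complex numbers admitting polynomial-precision closed forms, hence each term is computable in polynomial space. The total number of summands, taken over all discretised $\beta_j$'s at each cubic gate and over all truncated photon-number tuples $(n_2,\dots,n_m)$, is at most $\exp$, so the overall estimate lies in PSPACE.

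The main obstacle is the error analysis. The cubic phase gates shift momentum by $3\gamma_j \hat q_1^2$, so the spread of relevant coherent-state amplitudes after each gate grows polynomially in the modulus of the incoming amplitude and in $|\gamma_j|$; combined with the $\exp$-size parameters allowed by the bit-precision specification of \hyperref[defi:prob_passive]{\textsc{PassiveBosonNumEst}}, intermediate amplitudes can reach magnitude $\exp$. The delicate step is to choose the truncation radius, the grid spacing, the Airy-coefficient precision and the Fock-number cutoff so that (i) each individual approximation contributes at most $1/\exp$ error in the $2$-norm, (ii) the accumulated error through the $t=\poly$ inductive layers remains $1/\poly$, and (iii) the total discretisation has at most $\exp$-many points. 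Controlling the propagation of the energy and of the effective phase-space support through the nonlinearity introduced by the cubic gates is where I expect the technical heavy lifting to lie.
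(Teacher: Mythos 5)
Your overall strategy---branching the state into exponentially many coherent-state ``paths'' at each cubic gate, evolving each path exactly through the displaced passive linear layers, and summing exponentially many individually cheap amplitudes---is the same family of argument as the paper's, which also approximates the pre-$\hat G$ state by an $\exp$-term coherent superposition and exploits closure of the passive layers on coherent states. The differences are in mechanism: the paper truncates $e^{i\gamma\hat q^3}\ket\alpha$ in Fock space via the gentle measurement lemma and then uses Marshall's exact finite coherent-state decomposition (roots-of-unity construction) with explicit error bounds, rather than a discretised resolution of identity; and it handles the measurement by approximating $\ket n$ (with $n=\poly$) itself by a coherent superposition and reducing everything to Gaussian--Gaussian overlaps, rather than summing $|\langle n,n_2,\dots,n_m|\hat G|\vec\alpha\rangle|^2$ over exponentially many Fock tuples.

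There are, however, two genuine gaps. First, the claim that each overlap $\langle\beta|e^{i\gamma\hat q_1^3}|\alpha\rangle$ is ``computable to $1/\exp$ precision in polynomial space'' is precisely where the difficulty of the theorem lives, and you assert it rather than prove it. With $\alpha,\gamma=\exp$ (and $\gamma$ possibly inverse-exponential), the Airy-type representation of this amplitude involves prefactors and series terms of magnitude $\eexp$ with massive cancellations, so the evaluation requires intermediate precision $1/\eexp$, i.e.\ exponentially many bits---which a sequential polynomial-space machine cannot even write down. The paper resolves this by invoking the characterization $\PSPACE=$ parallel polynomial time ($\NCpoly$) and NC algorithms for arithmetic, polynomial evaluation, $\exp$, and $\Gamma$, so that each bit of the huge intermediate quantities is recomputed rather than stored; without this (or an equivalent space-bounded evaluation argument) your ``each summand is in \textsf{PSPACE}'' step, and hence the appeal to closure under exponential sums, is unsupported. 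The same issue recurs in your final step, since $\langle \vec n|\hat G|\vec\alpha\rangle$ for Fock indices up to $\exp$ involves multivariate-Hermite/hafnian-type expressions of exponential degree with the same precision blow-up. Second, the error analysis you defer is not routine bookkeeping: replacing $\frac{1}{\pi}\int d^2\beta\,\ket\beta\!\bra\beta$ by a truncated Riemann sum with quantitative $1/\exp$ operator-level error on the relevant states, and propagating per-branch $2$-norm errors through superpositions of $\exp$ many terms (where trace-distance-only control of each branch fails because of relative phases, and errors scale like $\sqrt{N}\epsilon$), is exactly the content of the paper's Lemmas on coherent and Gaussian superposition approximation. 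So the skeleton is right, but the two components you treat as routine---space-bounded evaluation of the path amplitudes and the quadrature/phase error control---are the proof.
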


\noindent Together with Theorem \ref{theo:CVBQP_complete_desc}, this result shows that $\mathsf{CVBQP}\subseteq \mathsf{PSPACE}$, thus completing the proof of Theorem \ref{theo:main_result}.

The rest of this section is devoted to the proof of Theorem \ref{theo:complexity_prob}. In section \ref{sec:cohdec}, we introduce coherent state decompositions (Lemma \ref{lem:csr_decomp_cubic}), which we use in section \ref{sec:algo} to obtain an algorithm for computing a simple approximation of the output state of the circuit in \hyperref[defi:prob_passive]{\textsc{PassiveBosonNumEst}} as a finite superposition of Gaussian states (Algorithm \ref{algo:decomp}). We further characterize the computational complexity of this algorithm (Lemma \ref{lem:csr}). Finally, we explain how to computer the number statistics of the approximated state in section \ref{sec:statsG} (Lemma \ref{lem:prob}) to conclude the proof.

\subsubsection{Coherent state decompositions}
\label{sec:cohdec}

The central step in proving Theorem \ref{theo:complexity_prob} is to approximate the output state $\ket{\psi_\mathrm{out}}=\hat{U}_t \ket{0}^{\otimes m}$ (see Eq.~(\ref{eqn:U_t_analysis})) as a superposition of Gaussian states. We achieve this using successive coherent state decompositions. More precisely, we approximate the action of a cubic phase gate on a coherent state $e^{i\gamma\hat{q}^3}\ket{\alpha}$ by a superposition of coherent states. This amounts to bounding the $\epsilon$-approximate coherent state rank, which is a measure of non-classicality of a quantum state and is defined as follows:

\begin{defi}[$\epsilon$-approximate coherent state rank \cite{Gehrke2012,sperling2015convex}] The approximate coherent state rank of a pure state $\ket{\psi}$ is the minimal $n$ such that there exists a normalized superposition of $n$ coherent states $\ket{\psi_\alpha}$ such that
\begin{equation}
    D(\psi,\psi_\alpha) \leq \epsilon.
\end{equation}
\end{defi}

\noindent Our next result formalizes how this approximation can be done to the required precision:

\begin{restatable}[Coherent superposition approximation of $e^{i\gamma\hat{q}^3}\ket{\alpha}$]{lem}{lemcsrdecompcubic}\label{lem:csr_decomp_cubic}
     For $\alpha \in \C, \gamma \in \R$ with $\alpha,\gamma = \exp$, let $\ket{\psi_{\alpha,\gamma}} := e^{i\gamma\hat{q}^3}\ket{\alpha}$. For $\epsilon = 1/\exp$, the $\epsilon$-approximate coherent state rank of $\ket{\psi_{\alpha,\gamma}}$ is upper bounded by $N = \exp$. The description of a corresponding coherent state approximation $\ket{\tilde{\psi}_{\alpha,\gamma}}$ can be computed in parallel polynomial time and is such that the $2$-norm distance $\left\| \ket{\psi_{\alpha,\gamma}} -  \ket{\tilde{\psi}_{\alpha,\gamma}}\right\| \leq 1/\exp$, where $\|\cdot\|$ is the $2$-norm in Hilbert space.
\end{restatable}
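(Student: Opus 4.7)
The plan is to construct $\ket{\tilde\psi_{\alpha,\gamma}}$ by localizing the state in position, linearizing the cubic phase on small subintervals, and rebuilding each piece as a coherent state via a Gaussian partition of unity. I would start in the position representation, where $\psi_{\alpha,\gamma}(q) \propto e^{-(q-q_0)^2/2 + ip_0 q + i\gamma q^3}$ with $q_0 = \sqrt{2}\,\re\alpha$ and $p_0 = \sqrt{2}\,\im\alpha$. The Gaussian envelope concentrates the wavefunction in a window $|q-q_0|\le R$, and a standard Gaussian tail bound shows that restricting to $R=\poly$ already incurs 2-norm error at most $1/\exp$.

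Next I would partition $[q_0-R,\,q_0+R]$ into $N$ subintervals of width $\Delta$ centered at points $q_j$, and Taylor-expand the cubic phase around each center as $\gamma q^3 = \gamma q_j^3 + 3\gamma q_j^2 (q-q_j) + r_j(q)$ with remainder $|r_j(q)| = \calO(|\gamma||q_j|\Delta^2 + |\gamma|\Delta^3)$. Since $|q_j|\le |q_0|+R = \exp$ and $|\gamma| = \exp$, picking $\Delta = 1/\exp$ keeps the pointwise phase error below $1/\exp$, which fixes $N = 2R/\Delta = \exp$, matching the claimed rank bound. After linearization, the local wavefunction is a Gaussian envelope times the linear momentum phase $e^{i(p_0 + 3\gamma q_j^2)q}$, which, up to an overall scalar, is identical to the wavefunction of a coherent state with position $q_0$ and momentum $p_0 + 3\gamma q_j^2$.

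Then I would glue these local approximations back together using a Gaussian partition of unity $\sum_j \chi_j(q)\approx 1$ on the window, with each $\chi_j$ a narrow Gaussian centered at $q_j$. Multiplying each local coherent-state wavefunction by $\chi_j$ and summing produces $\ket{\tilde\psi_{\alpha,\gamma}}$ as a sum of $N$ coherent states with weights $c_j = e^{-(q_j-q_0)^2/2 + i\gamma q_j^3}$ (up to a fixed normalization). Each of $q_j$, the momentum $p_0 + 3\gamma q_j^2$, and $c_j$ is an explicit analytic function of $\alpha$, $\gamma$ and $j$, so the parameters of the $j$-th coherent state are computable in polynomial time given $j$, yielding the claimed parallel polynomial time bound on the description.

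The main obstacle will be the partition-of-unity step, since the $\chi_j$ are not themselves coherent states and a naive truncation distorts the Gaussian shape of the underlying coherent state. I would address this by choosing the $\chi_j$ with a width matched to the coherent state's, so that $\chi_j$ times a coherent-state wavefunction is again a Gaussian wavepacket expressible as a short superposition of standard-width coherent states on the grid, or equivalently by invoking a Gabor-frame decomposition with a Gaussian window. A triangle-inequality aggregation of the three error sources---tail truncation, Taylor remainder, and partition-of-unity reconstruction---then yields the claimed $1/\exp$ bound in 2-norm and completes the proof.
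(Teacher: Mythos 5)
There is a genuine gap at the gluing step, and it is not a technicality you can defer: the two requirements you place on the windows $\chi_j$ are mutually exclusive. For the linearization error to be $1/\exp$ you need each piece supported on a width $\Delta=1/\exp$, because the quadratic remainder is $\calO(|\gamma||q_j|\Delta^2)$ with $|\gamma|,|q_j|=\exp$; but then $\chi_j$ times a coherent-state wavefunction is a Gaussian wavepacket of width $\sim\Delta$, i.e.\ a squeezed packet of energy $\sim 1/\Delta^2=\exp$, which is far from any single coherent state. Your proposed fix---widening $\chi_j$ to the vacuum width $\calO(1)$---destroys exactly the Taylor bound that fixed $\Delta$, since over a window of width $\calO(1)$ the neglected phase is $\calO(|\gamma||q_j|)=\exp$ radians. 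Re-expanding each narrow piece ``as a short superposition of standard-width coherent states'' is therefore not short and not free: it is itself an approximate coherent-state decomposition of an $\exp$-energy Gaussian, whose coefficients are no longer the explicit formulas $c_j=e^{-(q_j-q_0)^2/2+i\gamma q_j^3}$ you rely on for the parallel-polynomial-time claim, and whose errors must be aggregated over $\exp$ many overlapping terms (with coefficients stored to far higher precision than $1/\exp$, since individual terms can be exponentially large before cancellation). The alternative you mention, a Gabor-frame expansion with Gaussian window, is in effect a different proof that abandons the linearization altogether; there the real work is (i) showing the frame coefficients of this $\exp$-energy, $\exp$-cubicity state decay fast enough that $\exp$ terms give $1/\exp$ error, and (ii) computing those coefficients---which are Airy-type integrals $\int dq\, e^{i\gamma q^3+\text{quadratic}}$---to roughly $1/\eexp$ precision in parallel polynomial time. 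Neither point is addressed in your sketch.

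For comparison, the paper avoids the position-space surgery entirely: it bounds the energy $E(\alpha,\gamma)=\poly(\alpha,\alpha^*,\gamma)=\exp$ of $e^{i\gamma\hat q^3}\ket{\alpha}$, uses the gentle measurement lemma to truncate to the first $N=4E/\epsilon^2+1=\exp$ Fock states, computes the Fock amplitudes $a_n=\bra{n}e^{i\gamma\hat q^3}\ket{\alpha}$ to precision $1/\eexp$ via a truncated series for derivatives of the Airy function (this is where the parallel-arithmetic machinery is actually needed), converts the truncated Fock superposition into $N+1$ coherent states on a circle via Marshall's Theorem~1, and finally fixes a global phase so that closeness holds in $2$-norm rather than merely in trace distance. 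If you want to salvage your route, you would essentially have to reproduce analogues of steps (i) and (ii) above for the Gabor coefficients, at which point the argument is no simpler than the Fock-basis one.
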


\noindent The proof of Lemma \ref{lem:csr_decomp_cubic} is given in Appendix \ref{appendixsec:proof_csr_decomp_cubic}. It involves using the gentle measurement lemma \cite{Wilde2013} to truncate $\ket{\psi_{\alpha,\gamma}}$ to a superposition of Fock states of finite size, depending on the energy of $\ket{\psi_{\alpha,\gamma}}$ and the desired precision. Then, applying Theorem 1 from \cite{Marshall2023} allows us to find a coherent state decomposition for the truncated Fock state. To ensure that the decomposition can be obtained in \emph{parallel polynomial time}, we give a method to compute the coefficients of the truncated state up to doubly exponential precision by generalizing techniques from \cite{chabaud2024complexity,Miatto2020fastoptimization} for approximating the Airy function.
In addition, we add a global phase to the approximated state to ensure the closeness in terms of $2$-norm distance rather than only in trace distance.
Here, \emph{parallel polynomial time} refers to computations that can be performed in polynomial time on a \emph{parallel random-access machine} (PRAM) \cite{FW78} (see \hyperref[sec:methods]{Methods}).
The sequential runtime would still be exponential, but we can show that the computations can be parallelized, even with doubly exponential precision, since arithmetic can be performed in polylogarithmic parallel time in the bit length.
Parallel polynomial time computations can be simulated using a deterministic Turing machine with polynomial space and exponential time (i.e. $\PSPACE$) \cite{FW78}.

In what follows, we explain how such coherent state decompositions may be used to approximate the output state $\ket{\psi_\mathrm{out}}=\hat{U}_t \ket{0}^{\otimes m}$ as a finite superposition of Gaussian states.

\subsubsection{Computing a Gaussian state decomposition}
\label{sec:algo}

Looking at the structure of the unitary circuit $\hat{U}_t$ in Eq.~(\ref{eqn:U_t_analysis}), when propagating the input $m$-mode vacuum state forward we only encounter cubic phase gates and displaced passive linear unitary gates, except for the Gaussian gate $\hat{G}$ at the very end of the circuit (see Figure \ref{fig:c_eql}). Since displaced passive linear unitary gates map tensor products of coherent states to tensor product of coherent states, and given that we start with an input coherent (vacuum) state, if we can approximate the action of a cubic phase gate on a coherent state $e^{i\gamma\hat{q}^3}\ket{\alpha}$ as a superposition of coherent states, carrying out the said approximation each time we encounter a cubic phase gate combined with the action of displaced passive linear unitary gates ensures that the output state before the final Gaussian gate $\hat{G}$ is well approximated by a finite superposition of coherent states. Applying $\hat G$ then yields a superposition of Gaussian states at the output. 
\begin{figure}
    \centering
    \includegraphics[width = 1.03\linewidth]{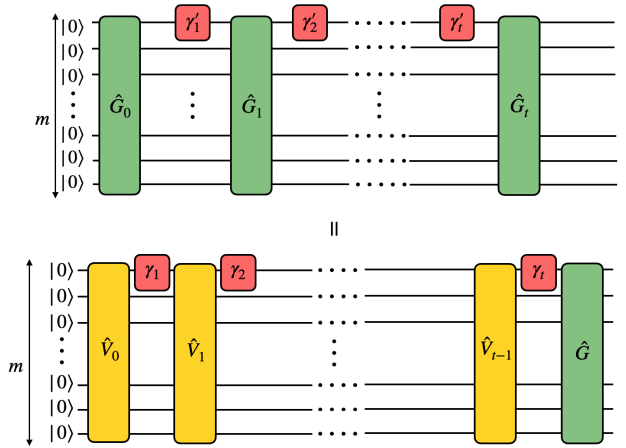}
    \caption{Bosonic circuit of the \textsf{CVBQP}-complete problem. $\hat{G},\hat{G}_0,\dots,\hat{G}_t$ are Gaussian unitary gates, $\hat{V}_0,\dots \hat{V}_{t-1}$ are displaced passive linear unitary gates, and $\gamma_1,\dots,\gamma_t,\gamma_1',\dots,\gamma_t'$ are cubic phase gates, with $t = \poly$, $|\gamma_i|,|\gamma_i'| \leq \exp$ $\forall i \in \{1,\dots,t\}$ and the elements of all the Gaussian unitary and displaced passive linear unitary gates are specified upto $m$ bits of precision, i.e., their magnitude is $\leq \exp$.}
    \label{fig:c_eql}
\end{figure}
This leads to Algorithm \ref{algo:decomp}.
\begin{algorithm}[ht] \textbf{Input:} Displaced passive linear unitary gates $\hat{V}_0, \dots, \hat{V}_{t-1}$, Gaussian unitary gate $\hat{G}$, parameters $\gamma_1, \dots, \gamma_t$.

Initialize $\ket{\psi_0} = \ket{0}^{\otimes m}$

Initialize $i = 1$.

\While{$i \leq t$}{
Compute $\ket{\psi_i'} = \hat{V}_{i-1} \ket{\psi_{i-1}}$ 

Approximate $\ket{\psi_i}= e^{i\gamma_i\hat{q}_1^3} \ket{\psi_i'}$ as superposition of coherent states via application of Lemma \ref{lem:csr_decomp_cubic} on each term in the input superposition.

$i \rightarrow i+1$.
} 

Compute $\ket{\tilde{\psi}_\mathrm{out}} = \hat{G}\ket{\psi_t}$.

\textbf{return} Superposition of Gaussian states $\ket{\tilde{\psi}_\mathrm{out}}$ approximating $\ket{\psi_\mathrm{out}} = \hat{U}_t \ket{0}^{\otimes m}$

\caption{Approximation by a superposition of Gaussian states} \label{algo:decomp} \end{algorithm}

\noindent The following result formalizes the precision and time complexity when using this algorithm:

\begin{restatable}[Gaussian superposition approximation]{lem}{lemcsr}\label{lem:csr}
     Given an $m$-mode state $\ket{\psi_\mathrm{out}} = \hat{U}_t \ket{0}^{\otimes m}$, with $\hat{U}_t$ given in Eq.~(\ref{eqn:U_t_analysis}) and $t = \poly$, we can obtain the description of a superposition of Gaussian states $\ket{\tilde{\psi}_\mathrm{out}} = \sum_{i=0}^{N} c_i \hat{G}\ket{\alpha^{(i)}_1\alpha^{(i)}_2\dots\alpha^{(i)}_m}$ in parallel polynomial time using Algorithm \ref{algo:decomp}, such that $\ket{\tilde{\psi}_\mathrm{out}}$ is $1/\exp$ close in trace distance to $\ket{\psi_\mathrm{out}} = \hat{U}_t \ket{0}^{\otimes m}$.
\end{restatable}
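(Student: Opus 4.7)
The plan is to analyze Algorithm \ref{algo:decomp} by induction on the iteration index $i$, carrying along both a $2$-norm error bound and a bound on the $\ell_1$ norm of the coefficients of the intermediate coherent state superposition that approximates $\ket{\phi_i^{\mathrm{exact}}} := e^{i\gamma_i \hat q_1^3} \hat V_{i-1} \cdots e^{i\gamma_1 \hat q_1^3} \hat V_0 \ket{0}^{\otimes m}$. The base case is immediate since $\ket{\phi_0^{\mathrm{exact}}} = \ket{0}^{\otimes m}$ is itself a tensor product of coherent states, with zero initial error and coefficient $\ell_1$ norm equal to $1$.

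For the inductive step, I would assume that after $i$ iterations the algorithm has produced $\ket{\tilde\phi_i} = \sum_k c_k^{(i)} \ket{\bm{\alpha}^{(i,k)}}$ with $\| \ket{\phi_i^{\mathrm{exact}}} - \ket{\tilde\phi_i} \| \leq \delta_i$ and $\sum_k |c_k^{(i)}| \leq L_i$. First I apply $\hat V_i$, which maps each tensor product of coherent states to another tensor product of coherent states and is thus implementable exactly on the superposition; being unitary, it preserves both the $2$-norm error and the coefficient $\ell_1$ norm. Next, for every term $c_k^{(i)} \ket{\bm{\alpha}^{(i,k)}}$, I replace $e^{i\gamma_{i+1} \hat q_1^3}$ applied to the first-mode coherent state by the approximation supplied by Lemma \ref{lem:csr_decomp_cubic} with some target single-term precision $\epsilon$. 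By the triangle inequality and unitary invariance of the $2$-norm, this step adds at most $L_i \epsilon$ to the total error, multiplies the number of terms by $N = \exp$, and multiplies the coefficient $\ell_1$ norm by at most the $\ell_1$ norm of the Lemma \ref{lem:csr_decomp_cubic} decomposition, which is $\exp$ in the input parameters. Iterating over the $t = \poly$ layers yields $\delta_t \leq t \, L_{\max} \, \epsilon$ with $L_{\max} = \exp$, and since Lemma \ref{lem:csr_decomp_cubic} allows $\epsilon$ to be any inverse exponential, I can choose it small enough that $\delta_t = 1/\exp$.

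Applying the final Gaussian unitary $\hat G$ preserves the $2$-norm and turns each coherent state term into a Gaussian state of the form $\hat G \ket{\bm{\alpha}}$, giving the advertised form of $\ket{\tilde\psi_{\mathrm{out}}}$. To convert from $2$-norm to trace distance I would use the standard inequality $D(\ket\psi,\ket\phi) \leq \| \ket\psi - \ket\phi \|$ between normalized pure states (up to a global phase, which is already accounted for by the $2$-norm guarantee in Lemma \ref{lem:csr_decomp_cubic}). For the complexity, each invocation of Lemma \ref{lem:csr_decomp_cubic} runs in parallel polynomial time; displaced passive linear unitaries act independently on each tensor factor of each term and are trivially parallelizable; and although the superposition has up to $N^t = \exp$ terms, these can be processed independently in parallel, so with $t = \poly$ sequential layers the whole algorithm runs in parallel polynomial time.

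The hard part will be controlling the tension between error and complexity through $L_{\max}$: rewriting a single coherent state as a superposition of $\exp$ coherent states typically produces coefficients whose $\ell_1$ norm is itself exponentially large, so after $t$ layers $L_{\max}$ is of order $\exp$, and the per-term precision $\epsilon$ supplied by Lemma \ref{lem:csr_decomp_cubic} must be driven to doubly exponential smallness in order to drag the accumulated error down to $1/\exp$. That Lemma \ref{lem:csr_decomp_cubic} still runs in parallel polynomial time at such extreme precision, thanks to the generalization of the Airy function approximation invoked there, is exactly what makes the error versus complexity trade-off close enough to conclude.
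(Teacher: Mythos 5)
Your overall architecture---induction over the $t$ layers, exact action of the displaced passive linear unitaries and the final Gaussian, per-term use of Lemma~\ref{lem:csr_decomp_cubic}, conversion from $2$-norm to trace distance, and parallel processing of the exponentially many terms---matches the paper's proof. The genuine gap is in the error-propagation step. You bound the per-layer error by $\epsilon L_i$ with $L_i$ the $\ell_1$ norm of the coefficients, assert $L_{\max}=\exp$ without justification, and then concede that the per-term precision may have to be driven to $1/\eexp$, claiming Lemma~\ref{lem:csr_decomp_cubic} supplies this. Neither half holds. The Marshall-type coefficients scale like $\sqrt{n!}\,|a_n|/\tilde{\epsilon}^{\,n}$ with $n$ up to $N=\exp$ and $\tilde{\epsilon}=\calO(1)$, so their $\ell_1$ norm is generically doubly exponential rather than $\exp$ (note also the internal tension in your write-up: if $L_{\max}$ really were $\exp$, then $\epsilon=1/\exp$ would already suffice and no doubly exponential precision would be needed). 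And Lemma~\ref{lem:csr_decomp_cubic} guarantees a $2$-norm error of only $1/\exp$, not $1/\eexp$: the $1/\eexp$ appearing in its proof is the arithmetic precision of the computed amplitudes, while the approximation error is floored at $1/\exp$ by the gentle-measurement Fock cutoff $N=4E/\epsilon^2$; pushing that to $1/\eexp$ would require $\eexp$ coherent states, which cannot be handled even with exponentially many parallel processors. So the trade-off you invoke in your final paragraph does not close.

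The paper avoids tracking coefficient magnitudes altogether. Its key tool is Lemma~\ref{lem:superpoSrace_distance} \cite{vomEnde2025}: when the exact and approximate superpositions carry the \emph{same} coefficients $c_i$ and differ term-by-term by at most $\epsilon$ in $2$-norm, the trace distance is bounded by $\epsilon\sqrt{n}$ via Cauchy--Schwarz (using the normalization $\sum_i |c_i|^2\le 1$ assumed there), so only the \emph{number} of terms enters. With $n=\exp$ terms per layer and $\epsilon=1/\exp$ from Lemma~\ref{lem:csr_decomp_cubic}, each cubic-gate layer contributes $1/\exp$, and the $t=\poly$ contributions add by the triangle inequality because the intervening unitaries do not increase trace distance; the complexity analysis then proceeds essentially as you describe. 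To rescue your $\ell_1$-weighted bookkeeping you would have to prove an actual bound on the coefficient norms of the successive decompositions, which is precisely the difficulty the paper's lemma is designed to sidestep.
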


\noindent The proof of Lemma \ref{lem:csr} is given in Appendix \ref{appendixsec:theo_csr}. It combines repeated applications of Lemma \ref{lem:csr_decomp_cubic} and the triangle inequality to bound the trace distance between $\ket{\psi_\mathrm{out}}$ and $\ket{\tilde{\psi}_\mathrm{out}}$. Note that Lemma \ref{lem:csr} can also be interpreted as a result on the upper bound of the $1/\exp$-approximate Gaussian rank of $\ket{\psi_\mathrm{out}}$ \cite{hahn2024classicalsimulation,Dias2024}.

Note that a $1/\poly$-precise approximation is sufficient for our purpose, but the time complexity of Algorithm~\ref{algo:decomp} remains the same whether we use $\epsilon = 1/\exp$ or $\epsilon = 1/\poly$, since it is dominated by the number of terms in the superposition, which is $\exp$.


\subsubsection{Number statistics of Gaussian state decompositions}
\label{sec:statsG}

Given the approximation of the output state of the given bosonic circuit $\hat{U}_t\ket{0}^{\otimes m}$ as a superposition of Gaussian states $\ket{\tilde{\psi}_\mathrm{out}}$ , estimating the probability generated by boson number measurement on the first mode, $P(n)$, given by Eq.~(\ref{eqn:prob_to_calculate}) is the problem of estimating the probability of a single-mode number measurement on a superposition of Gaussian states. This can be done in a variety of ways, for instance by using techniques from Gaussian boson sampling \cite{hamilton2017gaussian,quesada2018gaussian} computations. Here, we choose to do this by approximating $\ket{n}$, the Fock state characterizing $P(n)$ with the measurement POVM $\ket{n}\bra{n} \otimes \I_{m-1}$, by a coherent state decomposition $\ket{\tilde{n}}$, using Theorem 1 of \cite{Marshall2023} to obtain 
\begin{equation}
        \ket{\tilde{n}} = \frac{1}{\sqrt{\mathcal{N}}} \sum_{j = 0}^n d_j \ket{\xi e^{2\pi i j/(n+1)}},
    \end{equation}
    where 
    \begin{equation}
    d_j = \frac{e^{\xi^2/2}}{n+1} \sqrt{n!} \frac{1}{\xi^n} e^{-2\pi i j/(n+1)}.
    \end{equation}
where the normalization factor $\mathcal{N} = 1 + \mathcal{O}(\xi^{2(n+1)}/(n+1)!)$. This state is $\chi$ close in trace distance to $\ket{n}$, with $\chi = \mathcal{O}\left(\xi^{(n+1)}/\sqrt{(n+1)!}\right)$, where $\chi$ and $\xi$ are free parameters. We take $\xi = 1/\poly$ that implies $\chi = 1/\exp$, and this state can be described in time $\poly$ (time to calculate the $n$ coherent state coefficients, with $n = \poly$). The choice of $\chi = 1/\exp$ or $1/\poly$ does not impact the time complexity, which depends on $n = \poly$.

Probability estimation with the descriptions of $\ket{\tilde{\psi}_\mathrm{out}}$ and $\ket{\tilde{n}}$ reduces to the problem of calculating Gaussian overlaps, where ideas from \cite{Dias2024} can be used. This is formalized in the following Lemma:

\begin{restatable}[Number statistics of superposition of Gaussian states]{lem}{lemprob}\label{lem:prob}
    Given an $m$-mode Gaussian state superposition $\ket{\tilde{\psi}_\mathrm{out}} = \sum_{i=0}^{N} c_i \ket{G_i}$, with $N=\exp$ and a single-mode coherent state superposition $\ket{\tilde{n}} = \sum_{i=0}^n d_i \ket{\beta_i}$, the probability
    \begin{eqnarray}
        \tilde{P}(n) &=& \Tr[\Pi_{\tilde{n}} \ket{\tilde{\psi}_\mathrm{out}}\bra{\tilde{\psi}_\mathrm{out}} ] \nonumber \\ &=& \bra{\tilde{\psi}_\mathrm{out}}(\ket{\tilde{n}}\bra{\tilde{n}}\otimes \I_{m-1})\ket{\tilde{\psi}_\mathrm{out}},
    \end{eqnarray}
    where $\Pi_{\tilde{n}} = \ket{\tilde{n}}\bra{\tilde{n}} \otimes \I_{m-1}$, can be computed in parallel polynomial time up to precision $1/\exp$.
\end{restatable}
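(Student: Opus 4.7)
The plan is to expand $\tilde P(n)$ as a finite sum of scalar overlaps of Gaussian states, each admitting a closed-form expression, and then argue that every arithmetic step parallelises. Substituting the two superpositions into $\tilde P(n)$ and using bilinearity,
\begin{equation}
\tilde P(n)=\sum_{i,j=0}^{N}\sum_{k,\ell=0}^{n}c_i^{*}c_j\,d_\ell d_k^{*}\,O_{ijk\ell},
\end{equation}
with $O_{ijk\ell}=\bra{G_i}\bigl(\ket{\beta_\ell}\!\bra{\beta_k}\otimes\I_{m-1}\bigr)\ket{G_j}$; the number of terms is $(N+1)^2(n+1)^2=\exp$ since $N=\exp$ and $n=\poly$.

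Each matrix element $O_{ijk\ell}$ reduces to an inner product of two sub-normalised $(m-1)$-mode Gaussian states. Indeed, $(\bra{\beta_k}\otimes\I_{m-1})\ket{G_j}$ projects the first mode of $\ket{G_j}$ onto the coherent state $\ket{\beta_k}$, a standard Gaussian partial projection whose displacement and covariance on the remaining $m-1$ modes follow from a Schur-complement reduction of those of $\ket{G_j}$ together with a scalar Gaussian prefactor, and the analogous statement holds for $(\bra{\beta_\ell}\otimes\I_{m-1})\ket{G_i}$. The remaining $(m-1)$-mode overlap is then the standard Gaussian inner product, whose logarithm is a quadratic form in the displacements involving the determinant and inverse of the sum of the covariance matrices (see e.g.\ \cite{Dias2024}). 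Thus each $O_{ijk\ell}$ is computed from a constant number of $\mathcal{O}(m)\times\mathcal{O}(m)$ matrix inversions, determinants and scalar exponentiations on operands of $\poly$ bits.

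For the complexity bound, matrix inversion, determinant, and arithmetic on $\poly$-bit operands all lie in $\mathsf{NC}$, so each $O_{ijk\ell}$, and hence each summand, can be evaluated in parallel polylogarithmic depth on a PRAM with $\poly$ processors. All $\exp$ summands can be computed simultaneously, and the final sum is obtained via a balanced binary addition tree of polynomial depth; the overall algorithm therefore runs in parallel polynomial time.

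The delicate point is precision. The coefficients $c_i$ inherited from the iterated application of Lemma~\ref{lem:csr_decomp_cubic} can have $\sum_i|c_i|$ as large as $\exp$, so each overlap must be computed to much better than $1/\exp$ additive precision in order to control the aggregate error accumulated over $\exp$ terms. I plan to handle this by carrying $\poly$ bits of fixed-point precision throughout, which multiplies the parallel runtime only by constants since $\poly$-bit arithmetic remains in $\mathsf{NC}$, together with the uniform bound $|O_{ijk\ell}|\leq 1$ coming from the sub-normalisation of the projected Gaussians. Combining the per-term precision with the $\exp$-term sum then yields the overall additive precision $1/\exp$ claimed in the statement.
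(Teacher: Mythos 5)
Your decomposition is essentially the paper's: the paper also reduces $\tilde P(n)$ to $\exp$ many overlaps of $(m-1)$-mode Gaussian states obtained by projecting the first mode onto coherent states (via Theorems III.5 and III.8 of \cite{Dias2024}), evaluates them in parallel, and combines them with a tree-structured sum; you merely expand the quadruple sum directly instead of going through the unnormalized projected state $\braket{\tilde n|\tilde\psi_\mathrm{out}}$. The structural part of your argument is therefore fine.

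The genuine gap is in your precision paragraph. You assert that carrying $\poly$ bits of fixed-point precision suffices, but $\poly$ bits only gives additive accuracy $1/\exp$ per operation, which contradicts your own (correct) observation that each overlap must be known to \emph{much better} than $1/\exp$ because of the size of the coefficients. Moreover, the coefficients $c_i$ produced by iterating Lemma~\ref{lem:csr_decomp_cubic} are not merely of size $\exp$: the Marshall-type coefficients involve terms $\sqrt{n!}\,\tilde a_n/\tilde\epsilon^{\,n}$ with $n$ up to $N=\exp$ and displacements of magnitude $\exp$, so individual $c_i$ (and the products $c_i^*c_jd_\ell d_k^*$) can be of magnitude $\eexp$, with the final answer bounded by $1$ only through massive cancellations. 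Controlling those cancellations forces you to represent coefficients and intermediate quantities to additive precision $1/\eexp$, i.e.\ with $\exp$ many bits --- which is exactly why Lemma~\ref{lem:csr_decomp_cubic} computes the Fock amplitudes to $1/\eexp$ precision in the first place. The conclusion is still salvageable within your framework, but not by your stated budget: you must invoke the fact that arithmetic (including the determinants, inverses and exponentials in the Gaussian overlap formulas) is polylogarithmic in the bit length \cite{BCH86} (see Appendix~\ref{sec:parallel}), so that operating on $\exp$-bit operands still fits in parallel polynomial time; with only $\poly$-bit fixed-point arithmetic the aggregate error bound fails.
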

\noindent The proof of Lemma \ref{lem:prob}, provided in Appendix \ref{appendixsec:lem_prob}, involves deriving the description of the $(m-1)$-mode Gaussian state resulting from the projection of a single-mode coherent state onto an $m$-mode Gaussian state. This is followed by calculating the overlaps of $n^2 \exp^2 = \exp$ Gaussian states of $(m-1)$ modes to determine $\tilde{P}(n)$.
This calculation can be performed in parallel polynomial time using an exponential number of processors.
The precision error between $\tilde{P}(n)$ and $P(n)$, with $P(n)$ defined as 
\begin{equation}
    P(n) = \Tr[\Pi_n \rho] = \Tr[\ket{n}\bra{n}\otimes \I_{m-1} \rho]
\end{equation}
is given by
\begin{eqnarray}
    |P(n) - \tilde{P}(n)| &=& |\Tr[\Pi_n \rho - \Pi_{\tilde{n}} \rho + \Pi_{\tilde{n}} \rho - \Pi_{\tilde{n}}\tilde{\rho}]| \nonumber \\
    &\leq& |\Tr[\Pi_n \rho - \Pi_{\tilde{n}} \rho]| + |\Tr[\Pi_{\tilde{n}} \rho - \Pi_{\tilde{n}}\tilde{\rho}]| \nonumber \\
    &\leq& D(\Pi_n,\Pi_{\tilde{n}}) + D(\rho,\tilde{\rho}) \nonumber \\
    &\leq& 1/\exp
\end{eqnarray}
$\tilde{P}(n)$ approximates $P(n)$ up to precision $1/\exp$ for $n = \poly$.

Combining the parallel time complexity of obtaining the Gaussian superposition approximation of $\hat{U}_t\ket{0}^{\otimes m}$ via Algorithm \ref{algo:decomp} ($\poly$), the coherent superposition approximation of $\ket{n}$ ($\poly\log$) and the probability estimation using these approximations ($\poly$), we can estimate the required probability $P(n)$ given by Eq.~(\ref{eqn:prob_to_calculate}) up to inverse exponential precision in parallel $\poly$ time. Hence, this implies that the $\textsf{CVBQP}$-complete problem \hyperref[defi:prob_passive]{\textsc{PassiveBosonNumEst}}, which only requires inverse polynomial precision, can be solved in parallel $\poly$ time, thereby establishing Theorem \ref{theo:complexity_prob}, i.e., \hyperref[defi:prob_passive]{\textsc{PassiveBosonNumEst}} $\in$ \textsf{PSPACE}.
By the definition of a complete problem, this also concludes the proof of Theorem \ref{theo:main_result}, i.e., $\textsf{CVBQP} \subseteq \textsf{PSPACE}$.

\subsection{Potential limitations for improved simulation}
\label{sec:PSPACE}

Our main result Theorem \ref{theo:main_result} provides a perspective on the computational power of practical bosonic computations improving the previous bound $\textsf{CVBQP}\subseteq \textsf{EXPSPACE}$ \cite{chabaud2024complexity} to $\textsf{CVBQP}\subseteq \textsf{PSPACE}$. This improvement stems from two key insights. Firstly, by focusing on solving a complete problem rather than imposing stronger simulation constraints, our approach captures the essential complexity of \textsf{CVBQP} while avoiding unnecessary computational overhead. Second, we ensure a reasonably bounded energy throughout the computation (through the use of gentle measurement lemma in Lemma \ref{lem:csr_decomp_cubic}), avoiding the unrestricted energy growth permitted in prior algorithms.

Note that our current proof technique does not allow us to refine the bounds further, e.g., to show that $\textsf{CVBQP} \subseteq \textsf{PP}$. The proof for $\textsf{BQP} \subseteq \textsf{PP}$ relies on an algorithm summing up the number of computational paths leading to the desired output \cite{Adleman1997}. Since each of the computational paths can be simulated in polynomial time and given the finite number of paths for DV systems, this algorithm can solve \textsf{BQP} in \textsf{PP} by picking a random computational path. However, since for CV systems we have an infinite number of computational paths leading to the desired output, such a proof strategy fails. Our algorithm, which instead approximates the output state by a finite superposition of coherent states---essentially corresponding to a finite number of computational paths---based on a finite energy cutoff seems to require exponential space to store the descriptions of the amplitudes in the superposition (to keep the approximation error bounded). Therefore, it is unclear how to generalize it to place \textsf{CVBQP} within a smaller class like \textsf{PP}. If a more efficient method for truncating Airy functions (that does not require evaluating polynomials of exponential degree), crucially used in the calculation of Fock state amplitudes in Lemma \ref{lem:csr_decomp_cubic}, could be obtained, it would enable us to compute Fock state amplitudes in polynomial time.
Then we could use the trick of summing over all ``paths'' of coherent states to potentially compute the acceptance probability as the difference between accepting and rejecting paths of a nondeterministic Turing machine, thereby proving that $\textsf{CVBQP}\subseteq \textsf{PP}$.

The potential exclusion of \textsf{CVBQP} from \textsf{PP} could be due to bosonic systems offering an exponential advantage over discrete-variable systems, as previously suggested, or it may indicate that \textsf{CVBQP} does not accurately characterize realistic bosonic circuits, requiring a focus on circuits with lower average energy. Before drawing conclusions, further efforts should be directed toward simulating \textsf{CVBQP} within \textsf{PP}. The algorithm we propose here still seems to address a more challenging problem than necessary, as it estimates probabilities with exponential precision rather than polynomial precision. This indicates that there is potential for further optimization.


\section{Discussion}\label{sec:conclusion}

We demonstrate that bosonic quantum computations can be simulated in exponential time, significantly improving the previous best-known upper bound of exponential space. This is achieved by formulating a simple problem that encapsulates the complexity of these computations and preventing unrestricted growth of energy throughout the computation.

This paper advances our understanding of the practical computational power of bosonic systems and their viability as quantum computing platforms. We provide strong evidence that \textsf{CVBQP} is a well-motivated complexity class for problems solvable by bosonic quantum computers as it establishes a practical upper bound on their computational power. Further, we show that while bosonic quantum systems may offer a computational advantage over DV quantum computers, this advantage is at most exponential. Although this may still seem like a big difference, the result is a significant improvement over previous results which suggested an implausibly large gap in computational power between the two. With previous results suggesting that bosonic quantum computers might indeed provide exponential advantage compared to their DV counterparts \cite{AnnaMele2024,brenner2024factoring,upreti2025interplay}, our results motivate a rigorous exploration of this question.

Several results of independent interest also arise: our circuit reduction theorem (Lemma \ref{lem:passive_reduction}) allows to remove most squeezing in a circuit without increasing the number of non-Gaussianity-inducing cubic phase gates, improving results from \cite{Sefi2011}. Further, our simulation algorithm (Algorithm \ref{algo:decomp}) highlights coherent state rank as a non-Gaussian resource. The fact that the approximate coherent state rank is finite under the action of cubic phase gates and the coherent state approximation can be described rather simply in the standard model of CVQC underscore its importance in CV circuits, as similar simulation strategies using other non-Gaussian measures such as stellar rank \cite{chabaud2020stellar,chabaud2023resources} or Gaussian rank \cite{hahn2024classicalsimulation} are not trivially applicable.

Our work also opens several research directions.
Firstly, with Gaussian unitary gates and cubic phase gate being universal for CVQC \cite{Lloyd1999,arzani2025} and the availability of exact decomposition methods for many non-Gaussian gates into this gate set \cite{Kalajdzievski2021}, it would be interesting to see what other important CVQC circuits can be decomposed in polynomial number of cubic phase gates and Gaussian unitaries, to give a larger class of bosonic computations that can be simulated in exponential time. 

Moreover, a natural open question is whether $\textsf{CVBQP} \subseteq \textsf{PP}$, which would bring \textsf{CVBQP} even closer to $\textsf{BQP}$. In contrast, another compelling direction would be to investigate whether such a result could cause a collapse of the polynomial hierarchy, suggesting bosonic platforms as a theoretically preferred alternative to DV quantum computers.
Furthermore, one could consider defining $\textsf{CVBQP}$ with a more stringent energy constraint, as energy must be supplied to a circuit in any practical scenario and is inherently finite. In this regard, it would be interesting to explore the energy constraint that makes $\textsf{CVBQP} = \textsf{BQP}$ and whether experimental scenarios could provide more energy than this constraint. In other terms, what is the precise trade-off between energy and time/space complexity when computing with quantum systems? We leave these questions for future research.



\section{Methods}\label{sec:methods}
This section gives a few definitions relevant to the paper and additional technical results useful in the proof of Theorem \ref{theo:main_result}.
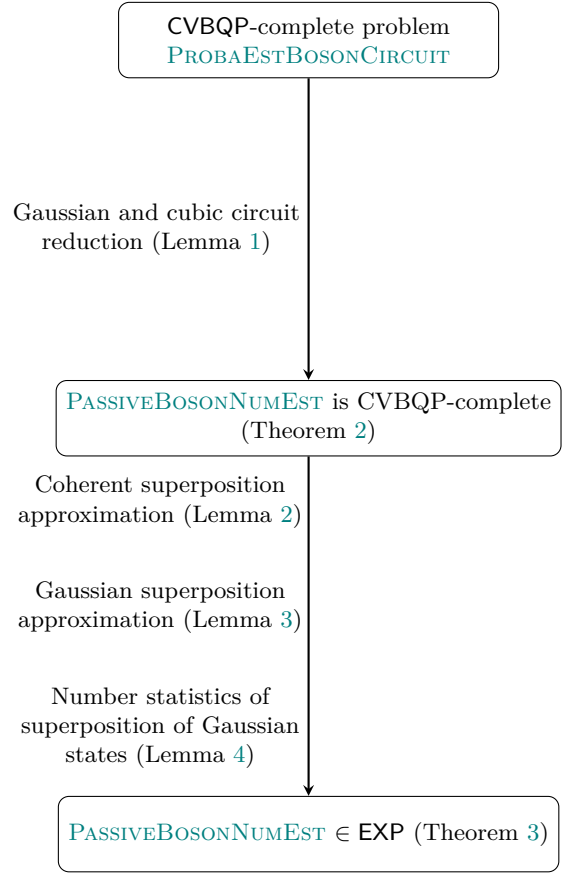
\begin{figure}[h!]
\centering
\begin{tikzpicture}[node distance=4 cm]

\node (proba) [process] {\textsf{CVBQP}-complete problem \\ \hyperref[defi:org_CVBQP]{\textsc{ProbaEstBosonCircuit}}};
\node (theorem2) [process, below=of proba] {\hyperref[defi:prob_passive]{\textsc{PassiveBosonNumEst}} is CVBQP-complete \\ (Theorem \ref{theo:CVBQP_complete_desc})};
\node (theorem3) [process, below=4.5cmof theorem2] {\hyperref[defi:prob_passive]{\textsc{PassiveBosonNumEst}} $\in\mathsf{PSPACE}$ (Theorem \ref{theo:complexity_prob})};

\draw [arrow] (proba) -- (theorem2);
\draw [arrow] (theorem2) -- (theorem3);

\node (gaussian) at ([xshift=-2 cm, yshift=-2cm] proba.south) [align=center, font = \fontsize{9}{11}\selectfont] {Gaussian and cubic circuit \\ reduction (Lemma \ref{lem:passive_reduction})};
\node (lemma2) at  ([xshift=-1.95 cm, yshift=-0.6cm] theorem2.south) [align = center, font = \fontsize{9}{11}\selectfont] {Coherent superposition \\ approximation (Lemma \ref{lem:csr_decomp_cubic})};
\node (lemma3) at  ([xshift=-1.95 cm, yshift=-2cm] theorem2.south) [align = center, font = \fontsize{9}{11}\selectfont] {Gaussian superposition \\ approximation (Lemma \ref{lem:csr})};
\node (lemma4) at  ([xshift=-1.95 cm, yshift=-3.6cm] theorem2.south) [align = center, font = \fontsize{9}{11}\selectfont] {Number statistics of \\superposition of Gaussian \\ states (Lemma \ref{lem:prob})};

\end{tikzpicture}
\caption{Proof sketch of Theorem \ref{theo:main_result}. Starting with the definition of \textsf{CVBQP}-complete problem \hyperref[defi:org_CVBQP]{\textsc{ProbaEstBosonCircuit}} \cite{chabaud2024complexity}, Gaussian circuit reduction (Lemma \ref{lem:passive_reduction}) allows us to define a simple problem \hyperref[defi:prob_passive]{\textsc{PassiveBosonNumEst}} that captures the complexity of \textsc{CVBQP} computations (Theorem \ref{theo:CVBQP_complete_desc}). Using Coherent superposition approximation(Lemma \ref{lem:csr_decomp_cubic}), Gaussian superposition approximation (Lemma \ref{lem:csr}) and Number statistics of superposition of Gaussian states (Lemma \ref{lem:prob}), we prove \hyperref[defi:prob_passive]{\textsc{PassiveBosonNumEst}} $\in \textsf{PSPACE}$ (Theorem \ref{theo:complexity_prob}), and consequently $\textsf{CVBQP} \subseteq \textsf{PSPACE}$ (Theorem \ref{theo:main_result}).}
\label{fig:proof_sketch}
\end{figure}

\subsection{Closeness of pure quantum states}

We use three different notions of closeness of two pure quantum states $\ket{\psi}$ and $\ket{\phi}$ in the paper: trace distance ($D$), fidelity ($F$) and $2$-norm distance.

The trace distance between two pure states is given by
\begin{equation}
    D(\psi,\phi) = \frac12 \|\ket{\psi}\bra{\psi} - \ket{\phi}\bra{\phi}\|_1,
\end{equation}
where $\|.\|_1$ refers to the operator $1$-norm. It is related to fidelity $F(\psi,\phi) = |\braket{\phi|\psi}|^2$ for pure states as
\begin{equation}
    D(\psi,\phi) = \sqrt{1 - F(\psi,\phi)}.
\end{equation}
Finally, the $2$-norm distance is given as
\begin{equation}
\|\ket\psi - \ket{\phi}\| = \sqrt{(\bra\psi - \bra{\phi})(\ket\psi - \ket{\phi})},
\end{equation}
where $\|\cdot\|$ refers to the $2$-norm. It is related to the trace distance as
\begin{equation}
    D(\psi,\phi) \leq \|\ket\psi - \ket{\phi}\|,
\end{equation}
since
\begin{equation}
    \|\ket\psi - \ket{\phi}\|^2 = 2(1 - \Re(\braket{\phi|\psi})) \geq 2(1 - \sqrt{F(\psi,\phi)}),
\end{equation}
and
\begin{equation}
    2(1 - \sqrt{F(\psi,\phi)}) \geq 1 - F(\psi,\phi) = D(\psi,\phi)^2.
\end{equation}

\subsection{Complexity classes} 

We give the definition of the complexity classes relevant for the paper \cite{papadimitriou1994computational,Aaronson2013book,chabaud2024complexity}. These classes are defined for decision problems, that is, problems whose answer is yes or no. For example, for a given input bit string $x$ of size $m$, ``Is $x$ prime?'' defines a decision problem. For bosonic quantum computers, the input size $m$ equivalently refers to the number of modes $m$. The scaling of resources is determined with respect to this input size $m$.
    
\textsf{EXPSPACE} consists of decision problems solvable by a deterministic Turing machine using exponential space, with no time restrictions.

\textsf{EXP} (or \textsf{EXPTIME}) is the class of decision problems that can be solved by a deterministic Turing machine in exponential time, i.e., time bounded by $\exp^{p(m)}$, where $p(m)$ is a polynomial function of the input size $m$.

\textsf{PSPACE} consists of decision problems solvable by a deterministic Turing machine using polynomial space, with no time restrictions.

The complexity class \textsf{PP} (Probabilistic Polynomial time) includes decision problems that can be solved by a probabilistic algorithm running in polynomial time on a Turing machine (a classical computer), where the probability of producing the correct answer exceeds 50\%. It characterizes problems where a computational process with random guessing has a slight advantage in determining the correct solution.

\textsf{BQP} (Bounded-Error Quantum Polynomial time) is the class of decision problems solvable by a DV quantum computer in polynomial time with a bounded probability of error. Formally, a problem belongs to \textsf{BQP} if there exists a quantum algorithm that outputs the correct answer with probability at least $2/3$ and runs in time polynomial in the input size $m$.

Finally, \textsf{CVBQP} (Bounded error Continuous-Variable Quantum Polynomial time computations) is the class of decision problems solvable by bosonic (CV) quantum computers with number of Gaussian unitary and non-Gaussian cubic phase gates polynomial in input size $m$ (problem solvable by bosonic quantum computers in polynomial time) with a bounded probability of error, the probability of correct answer being at least $2/3$.

With these definitions, while a relation like $\textsf{EXP}\subseteq \textsf{EXPSPACE}$ suggests that there are problems in $\textsf{EXPSPACE}$ that are more difficult to solve than all problems in $\textsf{EXP}$, relations like $\textsf{BQP} \subseteq \textsf{PP}$ imply that DV quantum computers cannot solve problems more complex than those belonging to the class $\textsf{PP}$. Finally, $\textsf{BQP} \subseteq \textsf{CVBQP}$ \cite{chabaud2024complexity} implies that DV quantum computers cannot solve harder problem than bosonic quantum computers\footnote{This is formally known for a bosonic gate set without the cubic phase gate \cite{chabaud2024complexity} but is expected to hold for that set too \cite{Gottesman2001,Sefi2011,arzani2025}.}. Whether this is an equality remains an open question, even when assuming promised energy bounds for the \textsf{CVBQP} computations.

\subsection{Approximation of $e^{i\gamma\hat{q}^3}\ket{\alpha}$}

A key component in the approximation of $\ket{\psi_\mathrm{out}} = \hat{U}_t \ket{0}^{\otimes m}$ as a superposition of Gaussian states is the approximation of $e^{i\gamma\hat{q}^3}\ket{\alpha}$ as a superposition of coherent states up to some trace distance error $\epsilon$ (see Algorithm \ref{algo:decomp}). This is always possible with the following result:
\begin{restatable}[$\epsilon$-approximate coherent state rank of $e^{i\gamma\hat{q}^3}\ket{\alpha}$]{lem}{lemcsrdecompcubicweak}\label{lem:csr_decomp_cubic_weak} For $\alpha \in \C, \gamma \in \R$, given a state of finite energy $\ket{\psi_{\alpha,\gamma}} = e^{i\gamma\hat{q}^3}\ket{\alpha}$, the $\epsilon$-approximate coherent state rank of $\ket{\psi_{\alpha,\gamma}}$ is upper bounded by $4E(\alpha,\gamma)/\epsilon^2 + 1$, where $E(\alpha,\gamma)$ is the energy of $e^{i\gamma\hat{q}^3}\ket{\alpha}$, which is a polynomial in $\alpha$, $\alpha^*$ and $\gamma$.
\end{restatable}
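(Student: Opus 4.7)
The plan is to approximate $\ket{\psi_{\alpha,\gamma}} = e^{i\gamma\hat{q}^3}\ket{\alpha}$ in two stages: first, truncate to a finite-dimensional Fock subspace using the gentle measurement lemma driven by the state's mean energy; second, express the truncated state as a superposition of $N+1$ coherent states arranged on a circle, exploiting the Vandermonde/DFT structure of the associated overlap matrix.

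For the first stage, I would compute the mean photon number $E(\alpha,\gamma) = \bra{\psi_{\alpha,\gamma}}\hat{n}\ket{\psi_{\alpha,\gamma}}$ explicitly. Using the Heisenberg action $e^{-i\gamma\hat{q}^3}\hat{p}\,e^{i\gamma\hat{q}^3} = \hat{p} + 3\gamma\hat{q}^2$ of the cubic phase gate together with $\hat{n} = \tfrac12(\hat{q}^2+\hat{p}^2-1)$, the operator $e^{-i\gamma\hat{q}^3}\hat{n}\,e^{i\gamma\hat{q}^3}$ expands as a polynomial in $\hat{q},\hat{p}$, whose expectation in $\ket{\alpha}$ yields a polynomial in $\alpha,\alpha^*,\gamma$; in particular, $E(\alpha,\gamma)<\infty$. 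By Markov's inequality applied to photon number measurement, $\Pr[n>N] \le E(\alpha,\gamma)/N$. The gentle measurement lemma then guarantees that the normalized projection $\ket{\psi_N}$ of $\ket{\psi_{\alpha,\gamma}}$ onto $\mathrm{span}\{\ket{0},\ldots,\ket{N}\}$ satisfies $D(\ket{\psi_N},\ket{\psi_{\alpha,\gamma}}) \le \sqrt{E(\alpha,\gamma)/N}$.

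For the second stage, I would represent $\ket{\psi_N}$ as a linear combination of the $N+1$ coherent states $\ket{\beta_j} = \ket{\xi e^{2\pi i j/(N+1)}}$, $j=0,\ldots,N$, with small radius $\xi>0$. The key observation is that the overlaps $\langle k|\beta_j\rangle = e^{-\xi^2/2}\xi^k e^{2\pi i j k/(N+1)}/\sqrt{k!}$ restricted to $k\in\{0,\ldots,N\}$ form an invertible DFT-like matrix (the Marshall et al. decomposition of a single Fock state reflects precisely the one-row version of this identity). Inverting this $(N+1)\times(N+1)$ matrix simultaneously for every $k\le N$ expresses each $\ket{k}$ with $k\le N$ as a combination of the same $N+1$ coherent states, up to a leakage into Fock modes $k>N$ whose magnitude is of order $\xi^{N+1}/\sqrt{(N+1)!}$ and can therefore be made arbitrarily small by choosing $\xi$ small enough. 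Applying the resulting inversion to $\ket{\psi_N}$ yields a superposition of exactly $N+1$ coherent states approximating $\ket{\psi_N}$ within any desired trace distance.

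Choosing $N = 4E(\alpha,\gamma)/\epsilon^2$ makes the truncation error $\sqrt{E/N} = \epsilon/2$, and taking $\xi$ small enough makes the second-stage error $\le \epsilon/2$; the triangle inequality then bounds the total trace distance by $\epsilon$, giving $\epsilon$-approximate coherent state rank at most $N+1 = 4E(\alpha,\gamma)/\epsilon^2 + 1$. The main obstacle is the second stage: naively applying Marshall's per-Fock-state decomposition would give an $O(N^2)$ bound on the coherent state count, and achieving the linear $N+1$ bound requires showing that a \emph{single} circle of $N+1$ equispaced coherent states simultaneously approximates every Fock vector in the truncated subspace, which in turn requires controlling the conditioning of the truncated-overlap matrix and quantifying the tail leakage uniformly in $k\le N$.
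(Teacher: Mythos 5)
Your proposal is correct and follows essentially the same route as the paper: truncate $\ket{\psi_{\alpha,\gamma}}$ via the gentle measurement lemma with $N = 4E(\alpha,\gamma)/\epsilon^2$ (computing $E$ in the Heisenberg picture as a polynomial in $\alpha,\alpha^*,\gamma$), then write the truncated state as a superposition of $N+1$ coherent states on a circle and conclude with the triangle inequality. The ``main obstacle'' you flag in the second stage is not actually an obstacle: Theorem 1 of Marshall et al., which the paper invokes directly, already applies to an \emph{arbitrary} superposition of the first $N+1$ Fock states (not just a single Fock state) and yields exactly $N+1$ equispaced coherent states with a freely tunable error $\delta = \mathcal{O}\bigl(\xi^{N+1}/\sqrt{(N+1)!}\bigr)$, so no separate conditioning analysis of the DFT-like overlap matrix is needed---your sketched inversion is essentially the content of that theorem.
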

\noindent The proof of Lemma \ref{lem:csr_decomp_cubic_weak} is given in section \ref{appendixsec:csr_decomp_cubic_weak} of the appendix and involves using the gentle measurement lemma \cite{Wilde2013} to truncate $\ket{\psi_{\alpha,\gamma}}$ to a superposition of Fock states of finite size, depending on the energy of $\ket{\psi_{\alpha,\gamma}}$ and the required precision $\epsilon$. Then applying Theorem 1 from \cite{Marshall2023} allows us to find the coherent state decomposition of the truncated state.

However, Lemma \ref{lem:csr_decomp_cubic_weak} is not enough for our purpose, for two reasons. Firstly, even though it proves the existence of a coherent state superposition approximation for $\ket{\psi_{\alpha,\gamma}}$, it is not obvious how to compute it a priori, and with which time complexity, given the required precision. Secondly, the $\epsilon$-approximate coherent state rank quantifies closeness in terms of the trace distance, i.e., closeness up to a global phase. While this global phase is irrelevant for a single coherent state, in our circuit, we are typically dealing with superpositions of coherent states as an input to a cubic phase gate. When approximating the action of the cubic phase gate on each coherent state using Lemma \ref{lem:csr_decomp_cubic_weak}, the resulting phases introduce relative phase shifts between the superposition terms, potentially leading to significant trace distance errors. To address this, we require a decomposition of \(\ket{\psi_{\alpha,\gamma}}\) that can be efficiently computed within the desired precision and is based on a stronger notion of closeness than trace distance. Lemma \ref{lem:csr_decomp_cubic} provides precisely this refinement.

\subsection{Parallel computation}

Implementing Algorithm~\ref{algo:decomp} on a Turing machine directly gives exponential runtime and exponential space.
To achieve polynomial space complexity, we make use of \emph{parallel computation}.
It turns out that $\PSPACE$ can equivalently be defined as the class of problems that can be solved by a polynomial-time computation with an exponential number of processors in the PRAM model \cite{FW78}, or polynomial-depth uniform circuits \cite{Bor77} with an exponential number of gates.
In that case, the output of each gate can be computed using a recursive algorithm requiring only polynomial space.
Since \textsf{CVBQP} circuits have polynomial depth, the main difficulty is to parallelize the application of Lemma~\ref{lem:csr_decomp_cubic}.
Firstly, observe that the coherent state amplitudes can be computed independently and thus in parallel.
Then, to compute the coefficients of the truncated state up to doubly exponential precision, we use the fact that arithmetic including solving polynomial systems has polylogarithmic parallel complexity in the bit length of the inputs \cite{BCH86,Nef94} (see details in Appendix~\ref{sec:parallel}).

\hspace{5mm}


\section{Acknowledgements}

U.C.\ acknowledges inspiring discussions with J.\ Marshall, A.\ Motamedi, S.\ Mehraban, and S.\ Gharibian. We acknowledge funding from the European Union’s Horizon Europe Framework Programme (EIC Pathfinder Challenge project Veriqub) under Grant Agreement No.\ 101114899.

\onecolumn

\bibliographystyle{alpha}
\bibliography{sample}

\newcommand{\etalchar}[1]{$^{#1}$}
\begin{thebibliography}{WPGP{\etalchar{+}}12}

\bibitem[AA13]{Aaronson2013}
S.~Aaronson and A.~Arkhipov.
\newblock The computational complexity of linear optics.
\newblock {\em Theory of Computing}, 9:143, 2013.

\bibitem[AABA{\etalchar{+}}24]{acharya2024quantum}
Rajeev Acharya, Laleh Aghababaie-Beni, Igor Aleiner, Trond~I Andersen, Markus Ansmann, Frank Arute, Kunal Arya, Abraham Asfaw, Nikita Astrakhantsev, Juan Atalaya, et~al.
\newblock Quantum error correction below the surface code threshold.
\newblock {\em Nature}, 2024.

\bibitem[Aar13]{Aaronson2013book}
Scott Aaronson.
\newblock {\em Quantum Computing since Democritus}.
\newblock Cambridge University Press, 2013.

\bibitem[ABC25]{arzani2025}
Francesco Arzani, Robert~I. Booth, and Ulysse Chabaud.
\newblock Can effective descriptions of bosonic systems be considered complete?
\newblock {\em arXiv:2501.13857}, 2025.

\bibitem[ADH97]{Adleman1997}
Leonard~M. Adleman, Jonathan DeMarrais, and Ming-Deh~A. Huang.
\newblock Quantum computability.
\newblock {\em SIAM Journal on Computing}, 26(5):1524--1540, 1997.

\bibitem[And86]{And86}
Richard~J Anderson.
\newblock {\em The complexity of parallel algorithms}.
\newblock PhD thesis, Stanford, CA, USA, 1986.

\bibitem[BB87]{BB87}
Jonathan~M. Borwein and Peter~B. Borwein.
\newblock {\em Pi and the AGM: a study in the analytic number theory and computational complexity}.
\newblock Wiley-Interscience, USA, 1987.

\bibitem[BCCRK24]{brenner2024factoring}
Lukas Brenner, Libor Caha, Xavier Coiteux-Roy, and Robert Koenig.
\newblock Factoring an integer with three oscillators and a qubit.
\newblock {\em arXiv:2412.13164}, 2024.

\bibitem[BCH86]{BCH86}
Paul~W. Beame, Stephen~A. Cook, and H.~James Hoover.
\newblock Log depth circuits for division and related problems.
\newblock {\em SIAM Journal on Computing}, 15(4):994--1003, 1986.

\bibitem[BEG{\etalchar{+}}24]{LogicalLukin}
Dolev Bluvstein, Simon~J. Evered, Alexandra~A. Geim, Sophie~H. Li, Hengyun Zhou, Tom Manovitz, Sepehr Ebadi, Madelyn Cain, Marcin Kalinowski, Dominik Hangleiter, J.~Pablo Bonilla~Ataides, Nishad Maskara, Iris Cong, Xun Gao, Pedro Sales~Rodriguez, Thomas Karolyshyn, Giulia Semeghini, Michael~J. Gullans, Markus Greiner, Vladan Vuleti{\'c}, and Mikhail~D. Lukin.
\newblock Logical quantum processor based on reconfigurable atom arrays.
\newblock {\em Nature}, 626(7997):58--65, 2024.

\bibitem[BFvL24]{Budinger2024}
Niklas Budinger, Akira Furusawa, and Peter van Loock.
\newblock All-optical quantum computing using cubic phase gates.
\newblock {\em Phys. Rev. Res.}, 6:023332, Jun 2024.

\bibitem[BIS{\etalchar{+}}18]{Boixo2018}
S.~Boixo, S.~V. Isakov, V.~N. Smelyanskiy, R.~Babbush, N.~Ding, Z.~Jiang, M.~J. Bremner, J.~M. Martinis, and H.~Neven.
\newblock Characterizing quantum supremacy in near-term devices.
\newblock {\em Nature Physics}, 14:595--600, 2018.

\bibitem[Bor77]{Bor77}
Allan Borodin.
\newblock On relating time and space to size and depth.
\newblock {\em SIAM Journal on Computing}, 6(4):733--744, 1977.

\bibitem[BSBN02]{Bartlett2002}
Stephen~D. Bartlett, Barry~C. Sanders, Samuel~L. Braunstein, and Kae Nemoto.
\newblock Efficient classical simulation of continuous variable quantum information processes.
\newblock {\em Physical Review Letters}, 88:097904, 2002.

\bibitem[BvL05]{Braunstein2005}
Samuel~L. Braunstein and Peter van Loock.
\newblock Quantum information with continuous variables.
\newblock {\em Rev. Mod. Phys.}, 77:513--577, June 2005.

\bibitem[BZ92]{borwein1992fast}
Jonathan~M Borwein and I~John Zucker.
\newblock Fast evaluation of the gamma function for small rational fractions using complete elliptic integrals of the first kind.
\newblock {\em IMA journal of numerical analysis}, 12(4):519--526, 1992.

\bibitem[CJMM24]{chabaud2024complexity}
Ulysse Chabaud, Michael Joseph, Saeed Mehraban, and Arsalan Motamedi.
\newblock Bosonic quantum computational complexity.
\newblock {\em arXiv:2410.04274}, 2024.

\bibitem[CM22]{chabaud2021holomorphic}
Ulysse Chabaud and Saeed Mehraban.
\newblock Holomorphic representation of quantum computations.
\newblock {\em Quantum}, 6:831, 2022.

\bibitem[CMG20]{chabaud2020stellar}
Ulysse Chabaud, Damian Markham, and Fr{\'e}d{\'e}ric Grosshans.
\newblock Stellar representation of non-{G}aussian quantum states.
\newblock {\em Physical Review Letters}, 124(6):063605, 2020.

\bibitem[CW23]{chabaud2023resources}
Ulysse Chabaud and Mattia Walschaers.
\newblock Resources for bosonic quantum computational advantage.
\newblock {\em Physical Review Letters}, 130(9):090602, 2023.

\bibitem[DK24]{Dias2024}
Beatriz Dias and Robert K\"onig.
\newblock Classical simulation of non-gaussian bosonic circuits.
\newblock {\em Phys. Rev. A}, 110:042402, Oct 2024.

\bibitem[FOP05]{ferraro2005gaussian}
Alessandro Ferraro, Stefano Olivares, and Matteo G.~A. Paris.
\newblock Gaussian states in continuous variable quantum information.
\newblock {\em arXiv:quant-ph/0503237}, 2005.

\bibitem[FW78]{FW78}
Steven Fortune and James Wyllie.
\newblock Parallelism in random access machines.
\newblock In {\em Proceedings of the Tenth Annual ACM Symposium on Theory of Computing}, STOC '78, page 114–118, New York, NY, USA, 1978. Association for Computing Machinery.

\bibitem[GKP01]{Gottesman2001}
Daniel Gottesman, Alexei Kitaev, and John Preskill.
\newblock Encoding a qubit in an oscillator.
\newblock {\em Phys. Rev. A}, 64:012310, Jun 2001.

\bibitem[GSV12]{Gehrke2012}
C.~Gehrke, J.~Sperling, and W.~Vogel.
\newblock Quantification of nonclassicality.
\newblock {\em Phys. Rev. A}, 86:052118, Nov 2012.

\bibitem[HKS{\etalchar{+}}17]{hamilton2017gaussian}
Craig~S Hamilton, Regina Kruse, Linda Sansoni, Sonja Barkhofen, Christine Silberhorn, and Igor Jex.
\newblock Gaussian boson sampling.
\newblock {\em Physical review letters}, 119(17):170501, 2017.

\bibitem[HM17]{Harrow2017}
Aram Harrow and Ashley Montanaro.
\newblock Quantum computational supremacy.
\newblock {\em Nature}, 549:203--209, 2017.

\bibitem[HTFY24]{hahn2024classicalsimulation}
Oliver Hahn, Ryuji Takagi, Giulia Ferrini, and Hayata Yamasaki.
\newblock Classical simulation and quantum resource theory of non-gaussian optics.
\newblock {\em arXiv:2404.07115}, 2024.

\bibitem[JJUW10]{JJUW10}
Rahul Jain, Zhengfeng Ji, Sarvagya Upadhyay, and John Watrous.
\newblock Qip = pspace.
\newblock In {\em Proceedings of the Forty-Second ACM Symposium on Theory of Computing}, STOC '10, page 573–582, New York, NY, USA, 2010. Association for Computing Machinery.

\bibitem[KLM01]{Knill2001}
E.~Knill, R.~Laflamme, and G.~Milburn.
\newblock A scheme for efficient quantum computation with linear optics.
\newblock {\em Nature}, 409:46--52, 2001.

\bibitem[KLM07]{kaye2007introduction}
P.~Kaye, R.~Laflamme, and M.~Mosca.
\newblock {\em An Introduction to Quantum Computing}.
\newblock OUP Oxford, 2007.

\bibitem[KMN{\etalchar{+}}07]{kok2007linear}
Pieter Kok, William~J Munro, Kae Nemoto, Timothy~C Ralph, Jonathan~P Dowling, and Gerard~J Milburn.
\newblock Linear optical quantum computing with photonic qubits.
\newblock {\em Reviews of modern physics}, 79(1):135--174, 2007.

\bibitem[KQ21]{Kalajdzievski2021}
Timjan Kalajdzievski and Nicol{\'{a}}s Quesada.
\newblock Exact and approximate continuous-variable gate decompositions.
\newblock {\em {Quantum}}, 5:394, February 2021.

\bibitem[LB99a]{lloyd1999quantum}
Seth Lloyd and Samuel~L. Braunstein.
\newblock Quantum computation over continuous variables.
\newblock In {\em Quantum Information with Continuous Variables}, pages 9--17. Springer, 1999.

\bibitem[LB99b]{Lloyd1999}
Seth Lloyd and Samuel~L. Braunstein.
\newblock Quantum computation over continuous variables.
\newblock {\em Phys. Rev. Lett.}, 82:1784--1787, Feb 1999.

\bibitem[MA23]{Marshall2023}
Jeffrey Marshall and Namit Anand.
\newblock Simulation of quantum optics by coherent state decomposition.
\newblock {\em Optica Quantum}, 1(2):78, December 2023.

\bibitem[MLA{\etalchar{+}}22]{madsen2022quantum}
Lars~S. Madsen, Fabian Laudenbach, Mohsen~Falamarzi. Askarani, Fabien Rortais, Trevor Vincent, Jacob F.~F. Bulmer, Filippo~M. Miatto, Leonhard Neuhaus, Lukas~G. Helt, Matthew~J. Collins, Adriana~E. Lita, Thomas Gerrits, Sae~Woo Nam, Varun~D. Vaidya, Matteo Menotti, Ish Dhand, Zachary Vernon, Nicol{\'a}s Quesada, and Jonathan Lavoie.
\newblock Quantum computational advantage with a programmable photonic processor.
\newblock {\em Nature}, 606(7912):75--81, 2022.

\bibitem[MMB{\etalchar{+}}24]{AnnaMele2024}
Francesco~Anna Mele, Antonio~Anna Mele, Lennart Bittel, Jens Eisert, Vittorio Giovannetti, Ludovico Lami, Lorenzo Leone, and Salvatore F.~E. Oliviero.
\newblock Learning quantum states of continuous variable systems.
\newblock {\em arXiv:2405.01431}, 2024.

\bibitem[MQ20]{Miatto2020fastoptimization}
Filippo~M. Miatto and Nicol{\'{a}}s Quesada.
\newblock Fast optimization of parametrized quantum optical circuits.
\newblock {\em {Quantum}}, 4:366, November 2020.

\bibitem[MVLG{\etalchar{+}}06]{menicucci2006universal}
Nicolas~C Menicucci, Peter Van~Loock, Mile Gu, Christian Weedbrook, Timothy~C Ralph, and Michael~A Nielsen.
\newblock Universal quantum computation with continuous-variable cluster states.
\newblock {\em Physical review letters}, 97(11):110501, 2006.

\bibitem[NC11]{NielsenChuang}
Michael~A. Nielsen and Isaac~L. Chuang.
\newblock {\em Quantum Computation and Quantum Information: 10th Anniversary Edition}.
\newblock Cambridge University Press, New York, NY, USA, 2011.

\bibitem[Nef94]{Nef94}
C.~Andrew Neff.
\newblock Specified precision polynomial root isolation is in nc.
\newblock {\em Journal of Computer and System Sciences}, 48(3):429--463, 1994.

\bibitem[Pap94]{papadimitriou1994computational}
C.H. Papadimitriou.
\newblock {\em Computational Complexity}.
\newblock Theoretical computer science. Addison-Wesley, 1994.

\bibitem[PFM{\etalchar{+}}21]{pogorelov2021compact}
Ivan Pogorelov, Thomas Feldker, Ch~D Marciniak, Lukas Postler, Georg Jacob, Oliver Krieglsteiner, Verena Podlesnic, Michael Meth, Vlad Negnevitsky, Martin Stadler, et~al.
\newblock Compact ion-trap quantum computing demonstrator.
\newblock {\em PRX Quantum}, 2(2):020343, 2021.

\bibitem[QAK18]{quesada2018gaussian}
Nicol{\'a}s Quesada, Juan~Miguel Arrazola, and Nathan Killoran.
\newblock Gaussian boson sampling using threshold detectors.
\newblock {\em Physical Review A}, 98(6):062322, 2018.

\bibitem[Ren92]{Ren92-4}
James Renegar.
\newblock On the computational complexity of approximating solutions for real algebraic formulae.
\newblock {\em SIAM Journal on Computing}, 21(6):1008--1025, 1992.

\bibitem[Ruz81]{Ruz81}
Walter~L. Ruzzo.
\newblock On uniform circuit complexity.
\newblock {\em Journal of Computer and System Sciences}, 22(3):365--383, 1981.

\bibitem[RWJ{\etalchar{+}}14]{Ronnow2014}
T.~F. R{\o}nnow, Z.~Wang, J.~Job, S.~Boixo, S.~V. Isakov, D.~Wecker, J.~M. Martinis, D.~A. Lidar, and M.~Troyer.
\newblock Defining and detecting quantum speedup.
\newblock {\em Science}, 345:420--424, 2014.

\bibitem[SER{\etalchar{+}}23]{sivak2023}
V.V. Sivak, A.~Eickbusch, B.~Royer, et~al.
\newblock Real-time quantum error correction beyond break-even.
\newblock {\em Nature}, 616:50--55, 2023.

\bibitem[Sho94]{shor1994algorithms}
Peter~W Shor.
\newblock Algorithms for quantum computation: discrete logarithms and factoring.
\newblock In {\em Proceedings 35th annual symposium on foundations of computer science}, pages 124--134. Ieee, 1994.

\bibitem[Spo94]{Spo94}
John~L. Spouge.
\newblock Computation of the gamma, digamma, and trigamma functions.
\newblock {\em SIAM Journal on Numerical Analysis}, 31(3):931--944, 1994.

\bibitem[SV15]{sperling2015convex}
J~Sperling and W~Vogel.
\newblock Convex ordering and quantification of quantumness.
\newblock {\em Physica Scripta}, 90(7):074024, 2015.

\bibitem[SvL11]{Sefi2011}
Seckin Sefi and Peter van Loock.
\newblock How to decompose arbitrary continuous-variable quantum operations.
\newblock {\em Phys. Rev. Lett.}, 107:170501, Oct 2011.

\bibitem[UC25]{upreti2025interplay}
Varun Upreti and Ulysse Chabaud.
\newblock Interplay of resources for universal continuous-variable quantum computing.
\newblock {\em arXiv:2502.07670}, 2025.

\bibitem[vE25]{vomEnde2025}
F.~vom Ende.
\newblock What is the trace distance between two superpositions if i know the trace distance between the individual terms of the superposition?
\newblock Quantum Computing StackExchange (version: 2025-01-27), jan 2025.
\newblock Quantum Computing StackExchange.

\bibitem[Wil13]{Wilde2013}
Mark~M. Wilde.
\newblock {\em Quantum Information Theory}.
\newblock Cambridge University Press, 2013.

\bibitem[WPGP{\etalchar{+}}12]{Weedbrook2012}
Christian Weedbrook, Stefano Pirandola, Ra\'ul Garc\'{\i}a-Patr\'on, Nicolas~J. Cerf, Timothy~C. Ralph, Jeffrey~H. Shapiro, and Seth Lloyd.
\newblock Gaussian quantum information.
\newblock {\em Rev. Mod. Phys.}, 84:621--669, May 2012.

\bibitem[XPvD{\etalchar{+}}21]{xue2021cmos}
Xiao Xue, Bishnu Patra, Jeroen~PG van Dijk, Nodar Samkharadze, Sushil Subramanian, Andrea Corna, Brian Paquelet~Wuetz, Charles Jeon, Farhana Sheikh, Esdras Juarez-Hernandez, et~al.
\newblock Cmos-based cryogenic control of silicon quantum circuits.
\newblock {\em Nature}, 593(7858):205--210, 2021.

\bibitem[YUA{\etalchar{+}}13]{yokoyama2013ultra}
Shota Yokoyama, Ryuji Ukai, Seiji~C Armstrong, Chanond Sornphiphatphong, Toshiyuki Kaji, Shigenari Suzuki, Jun-ichi Yoshikawa, Hidehiro Yonezawa, Nicolas~C Menicucci, and Akira Furusawa.
\newblock Ultra-large-scale continuous-variable cluster states multiplexed in the time domain.
\newblock {\em Nature Photonics}, 7(12):982, 2013.

\end{thebibliography}
\appendix
\section{Proof of Theorem \ref{theo:CVBQP_complete_desc}}\label{appendixsec:CVBQP_complete_desc}
\noindent First, we formally define \textsf{CVBQP}:
\begin{defi}
    [\textsf{CVBQP}] \cite[Definition 4.4]{chabaud2024complexity} \textsf{CVBQP} is the class of languages $L \subseteq \{0,1\}^*$ for which there exists a bosonic quantum circuit, described by the unitary $\hat{U}_t$, with $\poly$ number of Gaussian unitary and cubic phase gates, with the elements of the Gaussian unitary gates and cubicity of the cubic phase gates specified upto $m$ bits of precision, and single-mode boson number measurement, such that
    \begin{itemize}
        \item for all $x \in L$, $\mathcal{C}_{|x|}(x) \in [b,b+E]$ (accepts) with probability greater than $\frac{2}{3}$,
        \item for all $x \notin L$, $\mathcal{C}_{|x|}(x) \in [a-E,a]$ (rejects) with probability greater than $\frac{2}{3}$,
        \end{itemize}
given constants $a,b,E = \poly$ such that $b-a \geq \Omega\left(\frac{1}{\poly}\right)$ and $a - E \geq 0$, and $\mathcal{C}_{|x|}(x)$ denotes the output of the single-mode number measurement on $\hat{U}_t \ket{x}$, $\ket{x}$ representing the coherent state.
\end{defi}
\noindent A $\textsf{CVBQP}$-complete problem is \cite[Definition 4.5]{chabaud2024complexity}:
  \begin{defi}\label{defi:org_CVBQP}
      (\textsc{ProbaEstBosonCircuit}) Given an input vacuum state $\ket{0}^{\otimes m}$ and description of a circuit $\hat{U}_t$ with $\poly$ cubic phase gates and Gaussian unitary gates, with the elements of the Gaussian unitary gates and cubicity of the cubic phase gates specified upto $m$ bits of precision, and single mode number measurement, \textsf{CVBQP}-complete problem is the problem of deciding whether the probability of boson number measurement on the first mode of the circuit, $\mathrm{Pr}\left[\mathcal{C}(0) \in [b,b+E]\right] > 2/3$ or $\mathrm{Pr}\left[\mathcal{C}(0) \in [a-E,a]\right] > 2/3$, provided that one of the two cases hold, and such that $a,b,E = \poly$, $a - E \geq 0$, and $b - a = \Omega\left(\frac{1}{\poly}\right)$, and $\mathcal{C}(0)$ denotes the output of the single-mode number measurement on $\hat{U}_t \ket{0}^{\otimes  m}$.
  \end{defi}
\noindent With these definitions, we restate Theorem \ref{theo:CVBQP_complete_desc}:
\thmCVBQPcomplete*
\begin{proof}
Since \hyperref[defi:prob_passive]{\textsc{PassiveBosonNumEst}} is an instance of the \textsf{CVBQP}-complete problem \hyperref[defi:org_CVBQP]{\textsc{ProbaEstBosonCircuit}}, proving that it is a \textsf{CVBQP}-hard problem is enough to prove that is \textsf{CVBQP}-complete.

If we can estimate $\mathrm{Pr}\left[\mathcal{C}(0) \in [b,b+E]\right]$ or $\mathrm{Pr}\left[\mathcal{C}(0) \in [a-E,a]\right]$ up to additive precision $1/\poly$, then the mutual exclusivity of the two regions guarantee that we can answer the \textsf{CVBQP}-complete problem. Giving the circuit description
\begin{equation}\label{eqn:U_t_org}
    \hat{U}_t = \hat{G}_t e^{i\gamma_t\hat{q}_1^3} \hat{G}_{t-1} \dots e^{i\gamma_1\hat{q}_1^3} \hat{G}_{0}
\end{equation}
acting on the input vacuum state, where $\hat{G}_0,\dots,\hat{G}_t$ are Gaussian unitary gates, with $t = \mathcal{O}(\poly(m))$, if we can calculate
\begin{equation}
    P(n) = \Tr[\ket{n}\bra{n}\otimes \I_{m-1} \hat{U}_t \ket{0} \bra{0}^{\otimes m} \hat{U}_t^\dagger]
\end{equation}
up to additive precision $1/\poly$ for $n = \poly$, we can calculate $\mathrm{Pr}\left[\mathcal{C}(0) \in [b,b+E]\right]$ for $b,E = \poly$ up to total variational distance $1/\poly$ and solve the $\textsf{CVBQP}$-complete problem. Note that cubic phase gates acting on other modes can be swapped into the first mode and the resulting SWAP gate can be absorbed into the Gaussian unitary gates.

Now we rewrite $\hat{U}_t$ in Eq.~(\ref{eqn:U_t_org}) in a form which aids the further analysis of the problem. First, by inserting identity $\I = \hat{G}^\dagger \hat{G}$ with the appropriate Gaussian $\hat{G}$ and denoting
\begin{eqnarray}\label{eqn:Gaussian_brevity}
    \hat{\mathcal{G}}_i &:=& \hat{G}_i\hat{G}_{i-1}\dots\hat{G}_0, \hspace{0.5mm} \forall i \in \{0,\dots,m\}
\end{eqnarray}
for brevity, we obtain 
\begin{equation}
    \hat{U}_t = \hat{\mathcal{G}}_t \hat{\mathcal{G}}_{t-1}^\dagger e^{i\gamma_t\hat{q}_1^3}  \hat{\mathcal{G}}_{t-1} \dots \hat{\mathcal{G}}_{1}^\dagger e^{i\gamma_1\hat{q}_1^3}  \hat{\mathcal{G}}_{1} \hat{\mathcal{G}}_{0}^\dagger e^{i\gamma_1\hat{q}_1^3}  \hat{\mathcal{G}}_{0} .
\end{equation}
Then applying Lemma \ref{lem:passive_reduction} (see Appendix~\ref{appendixsec:proof_lem_passive_reduction} for a proof) and denoting
\begin{eqnarray}\label{eqn:passive_notations}
    \hat{B}_0 &:=& \hat{V}_0, \hspace{5mm} \mathcal{G}_t \hat{B}^\dagger_{t-1}  := \hat{G},\nonumber \\
    \hat{B}_{i} \hat{B}^\dagger_{i-1}  &:=& \hat{V}_i\quad\forall i \in \{1,2,\dots,t-1 \},
\end{eqnarray}
we get the unitary
\begin{equation}
    \hat{U}_t = \hat{G} e^{i\gamma_t'\hat{q}_1^3} \hat{V}_{t-1} \dots \hat{V}_1 e^{i\gamma_1'\hat{q}_1^3}\hat{V}_0,
\end{equation}
where $ \hat{V}_{t-1} \dots \hat{V}_0$ are displaced passive linear unitaries and $\hat{G}$ is a Gaussian unitary. Therefore, \hyperref[defi:prob_passive]{\textsc{PassiveBosonNumEst}} is $\mathsf{CVBQP}$-hard and since it is an instance of the \textsf{CVBQP}-complete problem \hyperref[defi:org_CVBQP]{\textsc{ProbaEstBosonCircuit}}, \hyperref[defi:prob_passive]{\textsc{PassiveBosonNumEst}} is $\mathsf{CVBQP}$-complete.
\end{proof}

\section{Proof of Lemma \ref{lem:passive_reduction}}\label{appendixsec:proof_lem_passive_reduction}
\noindent We restate Lemma \ref{lem:passive_reduction}:
\lemPassiveReduction*
\begin{proof}
    The proof works by inserting passive linear unitaries to compress the effect of $\hat{G}$ to the first quadrature.
\begin{equation}
    \hat{G}^\dagger e^{i\gamma \hat{q}^3} \hat{G} = e^{i\gamma (\hat{G}^\dagger \hat{q_1} \hat{G})^3}.
\end{equation}
Now,
\begin{equation}
    \hat{G}^\dagger \hat{q}_1 \hat{G} = (S)_{q_1,q_1} \hat{q}_1 + \dots + (S)_{q_1,q_m} \hat{q}_m + (S)_{q_1,p_1} \hat{p}_1 + \dots + (S)_{q_1,p_m} \hat{p}_m + d_1,
 \end{equation}
 so
 \begin{eqnarray}
     \hat{G}^\dagger \hat{q}_1 \hat{G} &=& \frac{\sqrt{(S)_{q_1,q_1}^2 + (S)_{q_1,p_1}^2}}{\sqrt{(S)_{q_1,q_1}^2 + (S)_{q_1,p_1}^2}} ((S)_{q_1,q_1} \hat{q}_1 + (S)_{q_1,p_1} \hat{p}_1) + \dots \nonumber\\ & &+ \frac{\sqrt{(S)_{q_1,q_m}^2 + (S)_{q_1,p_m}^2}}{\sqrt{(S)_{q_1,q_m}^2 + (S)_{q_1,p_m}^2}} ((S)_{q_1,q_m} \hat{q}_t + (S)_{q_1,p_m} \hat{p}_m) + d_1.
 \end{eqnarray}
Inserting identity $\I = \hat{D}^\dagger(d_1)\hat{D}(d_1)$ on either side of the Gaussians,
 \begin{eqnarray}
     \hat{D}(d_1)\hat{G}^\dagger \hat{q}_1 \hat{G} \hat{D}^\dagger(d_1) &=& \frac{\sqrt{(S)_{q_1,q_1}^2 + (S)_{q_1,p_1}^2}}{\sqrt{(S)_{q_1,q_1}^2 + (S)_{q_1,p_1}^2}} ((S)_{q_1,q_1} \hat{q}_1 + (S)_{q_1,p_1} \hat{p}_1) + \dots \nonumber\\ & &+ \frac{\sqrt{(S)_{q_1,q_m}^2 + (S)_{q_1,p_m}^2}}{\sqrt{(S)_{q_1,q_m}^2 + (S)_{q_1,p_m}^2}} ((S)_{q_1,q_m} \hat{q}_m + (S)_{q_1,p_m} \hat{p}_m).
 \end{eqnarray}
 If we modify the circuit by adding rotation gates $\hat{R}(-\theta_1)\otimes \dots \otimes \hat{R}(-\theta_m) = \hat{R}^\dagger(\theta_1)\otimes \dots \otimes \hat{R}^\dagger(\theta_m)$, where $\theta_i$ such that $\cos(\theta_i) = \frac{(S)_{q_1,q_i}}{\sqrt{(S)_{q_1,q_i}^2 + (S)_{q_1,p_i}^2}}$ and $\sin(\theta_i) = \frac{(S)_{q_1,p_i}}{\sqrt{(S)_{q_1,q_i}^2 + (S)_{q_1,p_i}^2}}$, where $i \in \{1,\dots,m\}$. Then
 \begin{eqnarray}
    && \hspace{-30mm} \hat{R}(\theta_m) \otimes \dots \otimes \hat{R}(\theta_1) \hat{D}(d_1)\hat{G}^\dagger \hat{q}_1 \hat{G} \hat{D}^\dagger(d_1) \hat{R}^\dagger(\theta_1)\otimes \dots \otimes \hat{R}^\dagger(\theta_m) \nonumber \\ && \hspace{-20mm} = \sqrt{(S)_{q_1,q_1}^2 + (S)_{q_1,p_1}^2} \hat{q}_1 + \dots +  \sqrt{(S)_{q_1,q_t}^2 + (S)_{q_1,p_t}^2} \hat{q}_t\nonumber \\ && \hspace{-20mm}= \frac{\sqrt{(S)_{q_1,q_1}^2 + (S)_{q_1,p_1}^2 + \dots + (S)_{q_1,q_t}^2 + (S)_{q_1,p_t}^2}}{\sqrt{(S)_{q_1,q_1}^2 + (S)_{q_1,p_1}^2 + \dots + (S)_{q_1,q_t}^2 + (S)_{q_1,p_t}^2}} \nonumber \\ && \times \left(\sqrt{(S)_{q_1,q_1}^2 + (S)_{q_1,p_1}^2} \hat{q}_1 + \dots +  \sqrt{(S)_{q_1,q_m}^2 + (S)_{q_1,p_m}^2} \hat{q}_t \right).
 \end{eqnarray}
Modifying again the circuit by adding a passive unitary $\hat{O}$ such that transforms the quadratures such that
 \begin{equation}
     \hat{q}_1 \rightarrow \frac{\left(\sqrt{(S)_{q_1,q_1}^2 + (S)_{q_1,p_1}^2} \hat{q}_1 + \dots +  \sqrt{(S)_{q_1,q_m}^2 + (S)_{q_1,p_m}^2} \hat{q}_m \right)}{\sqrt{(S)_{q_1,q_1}^2 + (S)_{q_1,p_1}^2 + \dots + (S)_{q_1,q_m}^2 + (S)_{q_1,p_m}^2}},
 \end{equation}
 we have
 \begin{eqnarray}
     && \hspace{-25mm} \hat{O} \hat{R}(\theta_m) \otimes \dots \otimes \hat{R}(\theta_1) \hat{D}(d_1)\hat{G}^\dagger \hat{q}_1 \hat{G} \hat{D}^\dagger(d_1) \hat{R}^\dagger(\theta_1)\otimes \dots \otimes \hat{R}^\dagger(\theta_m) \hat{O}^\dagger \nonumber\\ && = \sqrt{(S)_{q_1,q_1}^2 + (S)_{q_1,p_1}^2 + \dots + (S)_{q_1,q_m}^2 + (S)_{q_1,p_m}^2} \hat{q}_1 := \|(S)_1\|_2 \hat{q}_1.
 \end{eqnarray}
Therefore,
 \begin{equation}
     \hat{G}^\dagger e^{i\gamma \hat{q}_1^3} \hat{G} = \hat{D}^\dagger(d_1)\hat{R}^\dagger(\theta_m)\otimes \dots \otimes \hat{R}^\dagger(\theta_1) \hat{O}^\dagger e^{i\gamma \|(S)_1\|_2^3 \hat{q}_1^3} \hat{O} \hat{R}(\theta_1)\otimes \dots \otimes \hat{R}(\theta_m) \hat{D}(d_1),
 \end{equation}
where $\hat{D}(d_1)$ is a single-mode displacement operator, which can equivalently be written as $\hat{D}(d_1) \otimes \I_{m-1}$. Setting
\begin{eqnarray}
    \hat{P} &:=& \hat{O} \hat{R}(\theta_1)\otimes \dots \otimes \hat{R}(\theta_m), \\
    \gamma' &:=& \gamma \|(S)_1\|_2^3,
\end{eqnarray}
we have
\begin{equation}
    \hat{G}^\dagger e^{i\gamma \hat{q}_1^3} \hat{G} = \hat{D}^\dagger(d_1) \otimes \I_{m-1} \hat{P}^\dagger e^{i\gamma' \hat{q}_1^3} \hat{P} \hat{D}(d_1) \otimes \I_{m-1}.
\end{equation}
Setting
\begin{equation}
    \hat{B} := \hat{P} \hat{D}(d_1) \otimes \I_{m-1},
\end{equation}
we conclude
\begin{equation}
    \hat{G}^\dagger e^{i\gamma \hat{q}_1^3} \hat{G} = \hat{B}^\dagger e^{i\gamma' \hat{q}_1^3} \hat{B}.
\end{equation}
\end{proof}
\section{Proof of Lemma \ref{lem:csr_decomp_cubic}}\label{appendixsec:proof_csr_decomp_cubic}
\noindent We restate Lemma \ref{lem:csr_decomp_cubic}:
\lemcsrdecompcubic*
\begin{proof}
We find a coherent state approximation of $\ket{\psi_{\alpha,\gamma}} = e^{i\gamma\hat{q}_1^3} \ket{\alpha}$ in three steps:
\begin{itemize}
    \item Using the gentle measurement lemma, we truncate $\ket{\psi_{\alpha,\gamma}}$ to a state $\ket{\psi_N'}$ with a finite Fock support size $N+1$, such that 
    \begin{equation}
    \|\ket{\psi_{\alpha,\gamma}}\bra{\psi_{\alpha,\gamma}} - \ket{\psi_N'} \bra{\psi_N'}\|_1 \leq 1/\exp.
    \end{equation}
    \item We approximate the Fock state amplitudes of $\ket{\psi_N'}$ to precision $1/\eexp$ to obtain $\ket{\psi_N}$ and ensure that the 2-norm distance satisfies:
    \begin{equation}
        \|\ket{\psi_N'} - \ket{\psi_N}\| \leq 1/\eexp.
    \end{equation}
    \item We find a coherent state approximation of $\ket{\psi_N}$ using \cite[Theorem 1]{Marshall2023}: given $\ket{\psi_N} = \sum_{i=1}^N a_n \ket{n}$, there is a coherent state decomposition
    \begin{eqnarray}
            \ket{\tilde{\psi}_{\alpha,\gamma}} &=& \frac{1}{\sqrt{\mathcal{N}}} \sum_{k=0}^{N} c_k \ket{\tilde{\epsilon} e^{2\pi ik/(N+1)}}, \\
            c_k &=& \frac{e^{\tilde{\epsilon}^2/2}}{N+1} \sum_{n=0}^N \sqrt{n!} \frac{a_n}{\tilde{\epsilon}^n} e^{-2\pi ink/(N+1)},
        \end{eqnarray}
    where the normalization factor satisfies $\mathcal{N} = 1 + \mathcal{O}(\tilde{\epsilon}^{2(N+1)}/(N+1)!)$, and $\ket{\tilde{\psi}_{\alpha,\gamma}}$ is $\delta$-close in trace distance to $\ket{\psi_N}$, with $\delta = \mathcal{O}(\tilde{\epsilon}^{(N+1)}/\sqrt{(N+1)}!)$. Choosing $\delta = 1/\exp$ gives $\ket{\tilde{\psi}_{\alpha,\gamma}}$ that is $1/\exp$ close in trace distance to $\ket{\psi_{\alpha,\gamma}}$, and by adding a global phase to $\ket{\tilde{\psi}_{\alpha,\gamma}}$, we can further ensure $\ket{\tilde{\psi}_{\alpha,\gamma}}$ is $1/\exp$ close to $\ket{\psi_{\alpha,\gamma}}$ in 2-norm distance as well.
\end{itemize}

\medskip

We first use the gentle measurement lemma to truncate $\ket{\psi_{\alpha,\gamma}}$ to a state with finite support size in Fock basis:
\begin{lem}[Gentle measurement lemma \cite{Wilde2013}]\label{lem:gml}
     Given a state 
    \begin{equation}
        \ket{\psi_{\alpha,\gamma}} = e^{i\gamma\hat{q}_1^3}\ket{\alpha},
    \end{equation}
    such that 
    \begin{equation}
        \bra{\psi_{\alpha,\gamma}}\hat{n}\ket{\psi_{\alpha,\gamma}} \leq E,
    \end{equation}
    the normalized projection of $\ket{\psi_{\alpha,\gamma}}$ on the first $N$ states, $\ket{\psi_N'}$ is such that 
    \begin{equation}
        \frac12\|\ket{\psi_{\alpha,\gamma}}\bra{\psi_{\alpha,\gamma}} - \ket{\psi_N'} \bra{\psi_N'}\|_1 \leq \sqrt{\frac{E}{N}},
    \end{equation}
    where $\|.\|_1$ refers to the 1-norm.
\end{lem}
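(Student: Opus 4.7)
The plan is to combine a Markov-type tail bound on the photon-number distribution with the standard pure-state relation between fidelity and trace distance. The specific form $\ket{\psi_{\alpha,\gamma}} = e^{i\gamma\hat{q}_1^3}\ket{\alpha}$ is irrelevant to the proof: all we use is the single scalar energy hypothesis $\bra{\psi_{\alpha,\gamma}}\hat{n}\ket{\psi_{\alpha,\gamma}} \leq E$, so the statement is really a general fact about any pure state with bounded mean photon number.

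First I would introduce the truncation projector $\hat{P}_N = \sum_{n=0}^{N-1} \ket{n}\bra{n}$ onto the first $N$ Fock states. Since $\hat{n} = \sum_{n\geq 0} n \ket{n}\bra{n} \geq N(\I - \hat{P}_N)$ as operators (the inequality holds term-by-term on the Fock basis), taking the expectation value in $\ket{\psi_{\alpha,\gamma}}$ yields
\begin{equation}
E \geq \bra{\psi_{\alpha,\gamma}}\hat{n}\ket{\psi_{\alpha,\gamma}} \geq N\bra{\psi_{\alpha,\gamma}}(\I - \hat{P}_N)\ket{\psi_{\alpha,\gamma}},
\end{equation}
which rearranges to $\bra{\psi_{\alpha,\gamma}}\hat{P}_N\ket{\psi_{\alpha,\gamma}} \geq 1 - E/N$. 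This is just Markov's inequality for the photon-number distribution of $\ket{\psi_{\alpha,\gamma}}$.

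Next, by definition $\ket{\psi_N'} = \hat{P}_N\ket{\psi_{\alpha,\gamma}}/\|\hat{P}_N\ket{\psi_{\alpha,\gamma}}\|$. Using $\hat{P}_N^2 = \hat{P}_N$, the squared overlap is
\begin{equation}
|\braket{\psi_{\alpha,\gamma}|\psi_N'}|^2 = \frac{|\bra{\psi_{\alpha,\gamma}}\hat{P}_N\ket{\psi_{\alpha,\gamma}}|^2}{\bra{\psi_{\alpha,\gamma}}\hat{P}_N\ket{\psi_{\alpha,\gamma}}} = \bra{\psi_{\alpha,\gamma}}\hat{P}_N\ket{\psi_{\alpha,\gamma}} \geq 1 - E/N,
\end{equation}
so $F(\psi_{\alpha,\gamma},\psi_N') \geq 1 - E/N$. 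Finally, applying the pure-state identity $\tfrac12\|\ket{\phi}\bra{\phi} - \ket{\chi}\bra{\chi}\|_1 = \sqrt{1 - |\braket{\phi|\chi}|^2}$ (recalled in the Methods section) immediately gives the claimed inequality
\begin{equation}
\tfrac12\|\ket{\psi_{\alpha,\gamma}}\bra{\psi_{\alpha,\gamma}} - \ket{\psi_N'}\bra{\psi_N'}\|_1 = \sqrt{1 - F(\psi_{\alpha,\gamma},\psi_N')} \leq \sqrt{E/N}.
\end{equation}

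There is no substantive obstacle: the lemma is a two-line consequence of Markov's inequality and the fidelity--trace-distance relation for pure states, which is why it is usually invoked by reference to \cite{Wilde2013} rather than reproved. The only minor point worth stating carefully is the operator inequality $\hat{n} \geq N(\I - \hat{P}_N)$, which legitimizes the Markov step in the infinite-dimensional Fock setting.
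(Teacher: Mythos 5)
Your proof is correct. The paper does not actually prove this lemma --- it is invoked by citation to \cite{Wilde2013} --- and your two-step argument (the operator inequality $\hat{n} \geq N(\I - \hat{P}_N)$ giving the Markov-type tail bound $\bra{\psi_{\alpha,\gamma}}\hat{P}_N\ket{\psi_{\alpha,\gamma}} \geq 1 - E/N$, followed by the pure-state identity $\tfrac12\|\cdot\|_1 = \sqrt{1-F}$ together with the observation that the normalized projection has fidelity exactly $\bra{\psi_{\alpha,\gamma}}\hat{P}_N\ket{\psi_{\alpha,\gamma}}$) is precisely the standard derivation the citation stands in for, and it matches the stated bound $\sqrt{E/N}$; your remark that only the energy hypothesis is used, not the specific form $e^{i\gamma\hat{q}_1^3}\ket{\alpha}$, is also accurate.
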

\noindent Therefore, choosing $N = 4E/\epsilon^2$, we find 
\begin{equation}\label{eqn:truncated}
    \ket{\psi_N'} = \frac{1}{\sqrt{\mathcal{N}_N'}}\sum_{i=0}^N a_n \ket{n},
\end{equation}
with $a_n = \bra{n}e^{i\gamma \hat{q}^3}\ket{\alpha}$ and $\mathcal{N}_N' = \sum_{i=0}^N |a_n|^2$ such that
\begin{equation}
   \frac12 \|\ket{\psi_{\alpha,\gamma}}\bra{\psi_{\alpha,\gamma}} - \ket{\psi_N'} \bra{\psi_N'}\| \leq \frac{\epsilon}{2}.
\end{equation}
For $\ket{\psi_{\alpha,\gamma}} = e^{i\gamma\hat{q}^3}\ket{\alpha}$, $E$ is given by
\begin{eqnarray}
    E = \bra{\alpha}e^{-i\gamma\hat{q}^3} \hat{n} e^{i\gamma\hat{q}^3} \ket{\alpha} &=& \bra{\alpha}e^{-i\gamma\hat{q}_1^3} \left(\frac12 (\hat{q}^2 + \hat{p}^2) - \frac12 \right) e^{i\gamma\hat{q}_1^3}  \ket{\alpha} \nonumber \\
    &=& \frac12 \bra{\alpha}\hat{q}^2 + (\hat{p} + 3\gamma\hat{q}^2)^2 \ket{\alpha} - \frac12 \nonumber \\
    &=&\! \frac12 \bra{\alpha} \hat{q}^2\! + \!\hat{p}^2 \ket{\alpha}\! -\! \frac12\! +\! \frac92 \gamma^2 \bra{\alpha} \hat{q}^4 \ket{\alpha} + 3\gamma \bra{\alpha} \hat{p}\hat{q}^2 \ket{\alpha} + 3\gamma \bra{\alpha}\hat{q}^2 \hat{p} \ket{\alpha}.
\end{eqnarray}
Noting that with our conventions, $\hat{D}(\alpha)^\dagger \hat{q} \hat{D}(\alpha) = \hat{q} + \sqrt{2}\Re{\alpha}$, $\hat{D}(\alpha)^\dagger \hat{p} \hat{D}(\alpha) = \hat{p} + \sqrt{2}\Im{\alpha}$, we have
\begin{eqnarray}
    \bra{\alpha}\hat{q}^4\ket{\alpha} &=& 3 + 12\Re{\alpha}^2 + 4\Re{\alpha}^4, \nonumber \\
    \bra{\alpha} \hat{p}\hat{q}^2 \ket{\alpha} &=& 2\sqrt{2}\Re{\alpha} \bra{0}\hat{q}\hat{p}\ket{0} + \sqrt{2}\Im{\alpha} + 2\sqrt{2}\Re{\alpha}^2\Im{\alpha}, \nonumber \\
    \bra{\alpha}\hat{q}^2\hat{p} \ket{\alpha} &=& 2\sqrt{2}\Re{\alpha} \bra{0}\hat{p}\hat{q}\ket{0} + \sqrt{2}\Im{\alpha} + 2\sqrt{2}\Re{\alpha}^2\Im{\alpha}    ,
\end{eqnarray}
so
\begin{equation}\label{eqn:energy}
    E= |\alpha|^2 + \frac{9}{2}\gamma^2(3 + 12\Re{\alpha}^2 + 4\Re{\alpha}^4) + 6\gamma(\sqrt{2}\Im{\alpha} + 2\sqrt{2}\Re{\alpha}^2\Im{\alpha}),
\end{equation}
which is polynomial in $\alpha$, $\alpha^*$ and $\gamma$. Therefore, $\ket{\psi_N'}$ given by Eq.~(\ref{eqn:truncated}) with $N = 4E/\epsilon^2 + 1= 4\poly(\alpha,\gamma)/\epsilon^2 + 1 = \exp $, for $|\alpha|,|\gamma| = \exp$ and $\epsilon = 1/\exp$, approximates $\ket{\psi_{\alpha,\gamma}}$ up to trace distance $1/\exp$.

\medskip

Now we give the method to compute the Fock state coefficients $a_n$ up to precision $1/\eexp$ and the time required to do so.
\begin{eqnarray}
    a_n (\alpha,\gamma) &=& \bra{n}e^{i\gamma\hat{q}^3}\ket{\alpha} \nonumber \\
    &=& \int^{\infty}_{-\infty} dq \braket{n|q}\bra{q}e^{i\gamma\hat{q}^3}\ket{\alpha} \nonumber \\
    &=& \frac{1}{\sqrt{\pi 2^n n!}} \int^{\infty}_{-\infty} dq e^{i\gamma q^3} e^{-q^2/2} H_n(q) e^{-\alpha_I^2} e^{-\frac12(q-\sqrt{2}\alpha)^2} \nonumber \\
    &=& \frac{e^{-\alpha_I^2-\alpha^2}}{\sqrt{\pi 2^n n!}} \int^{\infty}_{-\infty} dq e^{i\gamma q^3} e^{-q^2 + \sqrt{2}q\alpha}H_n(q),
\end{eqnarray}
where $\alpha_I := \Im{\alpha}$ is the imaginary part of $\alpha$.

Following \cite[Appendix D]{Miatto2020fastoptimization}, we perform the substitution of variables $q = yk - iy^3$, such that $y^3 = \frac{1}{3\gamma}$, and obtain
\begin{equation}
    a_n (\alpha,\gamma) = \frac{ye^{\frac23 y^6} e^{-\alpha_I^2-\alpha^2 - i\sqrt{2}\alpha y^3}} {\sqrt{\pi 2^{n} n!}} \int_{-\infty}^{\infty} dk \exp\left[i(y^4 - i\sqrt{2}\alpha y)k + \frac{ik^3}{3} \right] H_n(yk - iy^3).
\end{equation}
Using $H_n(x+y) = \sum_{j=0}^n \binom{n}{j} H_n(x)(2y)^{n-j}$,
\begin{eqnarray}
    a_n (\alpha,\gamma) &=& \frac{ye^{\frac23 y^6} e^{-\alpha_I^2-\alpha^2 - i\sqrt{2}\alpha y^3}} {\sqrt{\pi 2^{n} n!}} \sum_{j=0}^n \binom{n}{j} H_j(-iy^3) (2y)^{n-j} \int_{-\infty}^{\infty} dk \exp\left[i(y^4 - i\sqrt{2}\alpha y)k + \frac{ik^3}{3} \right] (k)^{n-j} \nonumber \\
     &=& \frac{ye^{\frac23 y^6} e^{-\alpha_I^2-\alpha^2 - i\sqrt{2}\alpha y^3}} {\sqrt{\pi 2^{n} n!}} \sum_{j=0}^n \binom{n}{j} H_j(-iy^3) (2y)^{n-j} \mathrm{Ai}^{(n-j)}(w)|_{w = y^4 - i\sqrt{2}\alpha y},
\end{eqnarray}
where $\mathrm{Ai}^{(n-j)}(w)$ is $(n-j)^{th}$ derivative of Airy function $\mathrm{Ai}(w)$. From \cite[Eq. 128]{chabaud2024complexity},
\begin{eqnarray}
    \mathrm{Ai}^{(\ell)}(w) &=& \mathrm{Ai}(0)f^{(\ell)}(w) +  \mathrm{Ai}'(0)g^{(\ell)}(w) \nonumber \\
&=& \frac{ \mathrm{Ai}(0)}{\Gamma(1/3)} \sum_{n=\lceil \ell/3 \rceil}^\infty \frac{3^n \Gamma(n+1/3)}{(3n-\ell)!} w^{3n-\ell}
+ \frac{ \mathrm{Ai}'(0)}{\Gamma(2/3)} \sum_{n=\lceil \ell/3 \rceil - 1}^\infty \frac{3^n \Gamma(n+2/3)}{(3n+1-\ell)!} w^{3n+1-\ell},
\end{eqnarray}
where $\mathrm{Ai}(0) = \frac{3^{-2/3}}{\Gamma(2/3)}$ and $\mathrm{Ai}'(0) = \frac{-3^{-1/3}}{\Gamma(1/3)}$. If $\tilde{\mathrm{Ai}}^{(l)}(w)$ is the approximated value of the $\ell$th derivative by taking first $d$ terms of the infinite series, then by \cite[Eq. 129]{chabaud2024complexity},
\begin{equation}
    |\mathrm{Ai}^{(l)}(w) - \tilde{\mathrm{Ai}}^{(l)}(w)| \leq \frac{1 + \frac{|w|}{2d}}{1 - \frac{3|w|^3}{2d}} \frac{(|w|)^{3d-l}}{(2d/3)^d}.
\end{equation}
Since $|w| = |y^4 - i\sqrt{2}\alpha y| = \exp$ for $\gamma,\alpha = \exp$. Therefore, taking $d = \exp$ with a sufficiently large polynomial in the exponential, we get
\begin{equation}
    |\mathrm{Ai}^{(l)}(w) - \tilde{\mathrm{Ai}}^{(l)}(w)| \leq 1/\eexp.
\end{equation}
To achieve this bound, we need to approximate $f^{(\ell)}(w)$ and $g^{(\ell)}(w)$ up to error $1/\eexp$ in parallel polynomial time.
We apply the tools developed in Appendix~\ref{sec:parallel}, computing products/divisions using \cite{BCH86}, $\Gamma(\cdot)$ using Lemma~\ref{lem:gamma}, and evaluating the polynomials in $w$ using Lemma~\ref{lem:fast-poly-eval}.
The magnitude of each of the terms in $f^{(\ell)}(w)$ (resp.\ $g^{(\ell)}(w)$) is $= \eexp$, therefore $f^{(\ell)}(w)$ (resp.\ $g^{(\ell)}(w)$) can be approximated up to error $1/\eexp$ in time $\exp$.

We then approximate $a_n(\alpha,\gamma)$ as
\begin{equation}
    \tilde{a}_n (\alpha,\gamma) = \frac{ye^{\frac23 y^6} e^{-\alpha_I^2-\alpha^2 - i\sqrt{2}\alpha y^3}} {\sqrt{\pi 2^{n} n!}} \sum_{j=0}^n \binom{n}{j} H_j(-iy^3) (2y)^{n-j} (\tilde{\mathrm{Ai}}^{(n-j)}(w))_{w = y^4 - i\sqrt{2}\alpha y}.
\end{equation}
This can be computed in parallel polynomial time with $n=\exp$ and precision $1/\eexp$, computing exponentials with Lemma~\ref{lem:exp}.
The error of the approximation is
\begin{eqnarray}
    |a_n (\alpha,\gamma) - \tilde{a}_n (\alpha,\gamma)| &\leq &  \frac{ye^{\frac23 y^6} e^{-\alpha_I^2-\alpha^2 - i\sqrt{2}\alpha y^3}} {\sqrt{\pi 2^{n} n!}} \nonumber \\ && \hspace{8mm} \times\sum_{j=0}^n \binom{n}{j} H_j(-iy^3) (2y)^{n-j} |(\mathrm{Ai}^{(n-j)}(w))_{w = y^4 - i\sqrt{2}\alpha y} - (\tilde{\mathrm{Ai}}^{(n-j)}(w))_{w = y^4 - i\sqrt{2}\alpha y}|. \nonumber \\
\end{eqnarray}
Since $|\binom{n}{j} H_j(-iy^3) (2y)^{n-j}| = \eexp$, $\tilde{a}_n (\alpha,\gamma)$ approximates $a_n(\alpha,\gamma)$ to precision $1/\eexp$ and can be computed in parallel polynomial time with $n=\exp$ and precision $1/\eexp$, computing summands in parallel, and the exponential function with Lemma~\ref{lem:exp}.
We have to calculate $N = \exp$ such Fock state coefficients, which we can also do in parallel. We define 
\begin{equation}
\ket{\psi_N} = \frac{1}{\sqrt{\mathcal{N}_N}} \sum_{n=0}^{N} \tilde{a}_n \ket{n},
\end{equation}
where the normalization factor is given by ${\mathcal{N}_N} = \sum_{n=0}^{\exp} |\tilde{a}_n|^2$ and can be computed in parallel polynomial time, since $N = \exp$. Now,
\begin{eqnarray}
    \frac12\|\ket{\psi_N'}\bra{\psi_N'} - \ket{\psi_N}\bra{\psi_N}\|_1 \leq \|\ket{\psi_N'} - \ket{\psi_N}\| &=& \sqrt{\sum_{n=0}^{N}\left|\frac{a_n - \tilde{a}_n}{\sqrt{\mathcal{N}_N}}\right|^2} + \frac{1}{\eexp} \nonumber \\ &\leq&  1/\eexp,
\end{eqnarray}
where $\|.\|_1$ refers to the trace distance, $\|.\|$ refers to the 2-norm.
Therefore,
\begin{eqnarray}
\frac12\|\ket{\psi_{\alpha,\gamma}}\bra{\psi_{\alpha,\gamma}} - \ket{\psi_N}\bra{\psi_N}\|_1 &\leq& \frac12\|\ket{\psi_{\alpha,\gamma}}\bra{\psi_{\alpha,\gamma}} - \ket{\psi_N'}\bra{\psi_N'}\|_1 +  \frac12\|\ket{\psi_N'}\bra{\psi_N'} - \ket{\psi_N}\bra{\psi_N}\|_1 \nonumber \\ &\leq& 1/\exp
\end{eqnarray}

\noindent Further, since for pure states,
\begin{equation}
    D(\ket{\psi_N}\bra{\psi_N},\ket{\psi_{\alpha,\gamma}}\bra{\psi_{\alpha,\gamma}}) = \sqrt{1 - F(\ket{\psi_N},\ket{\psi_{\alpha,\gamma}})} = \sqrt{1 - |\braket{\psi_N|\psi_{\alpha,\gamma}}|^2}.
\end{equation}
Therefore,
\begin{eqnarray}\label{eqn:lem_1_1}
    1 - |\braket{\psi_N|\psi_{\alpha,\gamma}}|^2 &\leq& 1/\exp, \nonumber \\
    |\braket{\psi_N|\psi_{\alpha,\gamma}}|^2  &\geq& 1 - 1/\exp.
\end{eqnarray}
Since
\begin{equation}
    \braket{\psi_N|\psi_{\alpha,\gamma}} = \frac{\sum_{i=0}^{N} \tilde{a}_n^* a_n}{\sqrt{\sum_{i=0}^{N} |\tilde{a}_n|^2}} = \sqrt{\sum_{i=0}^{N} |\tilde{a}_n|^2} + \frac{1}{\eexp}
\end{equation}
This is real and positive. Therefore, Eq.~(\ref{eqn:lem_1_1}) implies
\begin{equation}
    \Re(\braket{\psi_N|\psi_{\alpha,\gamma}}) = |\braket{\psi_N|\psi_{\alpha,\gamma}}| + \frac{1}{\eexp} \geq 1 - 1/\exp,
\end{equation}
up to precision $1/\eexp$. And finally
\begin{equation}
    \|\ket{\psi_{\alpha,\gamma}} - \ket{\psi_N}\| = \sqrt{2(1 - \Re(\braket{\psi_N|\psi_{\alpha,\gamma}}))} \leq 1/\exp.
\end{equation}
Now, having
\begin{equation}
    \ket{\psi_N} = \frac{1}{\sqrt{\mathcal{N}_N}} \sum_{i=0}^{\exp} \tilde{a}_n \ket{n} 
\end{equation}
we get the coherent state decomposition using \cite[Theorem 1]{Marshall2023},
\begin{eqnarray}
    \ket{\tilde{\psi}_{\alpha,\gamma}} &=& \frac{1}{\sqrt{\mathcal{N}}} \sum_{k=0}^{N} \tilde{c}_k \ket{\tilde{\epsilon} e^{2\pi ik/(N+1)}}, \\
    \tilde{c}_k &=& \frac{e^{\tilde{\epsilon}^2/2}}{(N+1)\sqrt{\mathcal{N}_N}} \sum_{n=0}^N \sqrt{n!} \frac{\tilde{a}_n}{\tilde{\epsilon}^n} e^{-2\pi ink/(N+1)},
\end{eqnarray}
where the normalization factor $\mathcal{N} = 1 + \mathcal{O}(\tilde{\epsilon}^{2(N+1)}/(N+1)!)$, which is $\delta$ close in trace distance to $\ket{\psi_N}$, with $\delta = \mathcal{O}(\tilde{\epsilon}^{(N+1)}/\sqrt{(N+1)!})$.
    
Since $N = \exp$, taking $\tilde{\epsilon} = \calO(1)$ suffices to get $\delta = 1/\exp$. Given $\tilde{a}_n$, $\forall n \in \{0,\dots,N\}$. To get $\tilde{c}_k$, we have to sum up $N=\exp$ terms, which takes parallel polynomial time.
We store an exponential number of coherent state coefficients, each with precision $1/\eexp$ (i.e., $\exp$ bits), which we compute in parallel.
In order to achieve a $2$-norm distance of $1/\exp$, it suffices to store the displacement parameters with a precision of $\poly$ bits, as $\|\ket{\alpha} - \ket{\beta}\| =  O(|\alpha-\beta|)$.
By construction,
\begin{equation}
    D(\ket{\psi_N}\bra{\psi_N},\ket{\tilde{\psi}_{\alpha,\gamma}}\bra{\tilde{\psi}_{\alpha,\gamma}}) \leq \delta = 1/\exp.
\end{equation}
This implies
\begin{equation}
    |\braket{\psi_N|\tilde{\psi}_{\alpha,\gamma}}|^2  \geq 1 - 1/\exp.
\end{equation}
Now,
\begin{equation}
    \braket{\psi_N|\tilde{\psi}_{\alpha,\gamma}} = \frac{1}{\sqrt{\mathcal{N}\mathcal{N}_N}} \sum_{n=0}^N \sum_{k=0}^N \tilde{a}_n^* \tilde{c}_k \braket{n|\tilde{\epsilon} e^{2\pi ik/(N+1)}}
 \end{equation}
Since we have the exact expression for the overlap of a coherent state with a given Fock state, $\braket{\psi_N|\tilde{\psi}_{\alpha,\gamma}}$ can calculated in time $\mathcal{O}(N^2) = \exp$. Lets say
\begin{equation}
    \braket{\psi_N|\tilde{\psi}_{\alpha,\gamma}} = re^{i\theta}
\end{equation}
by adding a global phase $\ket{\tilde{\psi}_{\alpha,\gamma}} \rightarrow e^{-i\theta} \ket{\tilde{\psi}_{\alpha,\gamma}}$, we get $\braket{\psi_N|\tilde{\psi}_{\alpha,\gamma}} = r$ real and positive. This implies
\begin{equation}
    \Re(\braket{\psi_N|\tilde{\psi}_{\alpha,\gamma}}) = |\braket{\psi_N|\tilde{\psi}_{\alpha,\gamma}}| \geq  1 - 1/\exp,
\end{equation}
and
\begin{equation}
    \|\ket{\tilde{\psi}_{\alpha,\gamma}} - \ket{\psi_N}\| = \sqrt{2(1 - \Re(\braket{\psi_N|\tilde{\psi}_{\alpha,\gamma}}))} \leq 1/\exp.
\end{equation}
Putting everything together, given $\ket{\psi_{\alpha,\gamma}}$, we obtain $\ket{\tilde{\psi}_{\alpha,\gamma}}$ in parallel polynomial time, such that
\begin{equation}
\|\ket{\psi_{\alpha,\gamma}} - \ket{\tilde{\psi}_{\alpha,\gamma}}\| \leq \|\ket{\psi_{\alpha,\gamma}} - \ket{\psi_N}\| + \|\ket{\psi_N} - \ket{\tilde{\psi}_{\alpha,\gamma}}\| \leq 1/\exp.
\end{equation}
\end{proof}
\section{Proof of Lemma \ref{lem:csr}}\label{appendixsec:theo_csr}
\noindent We restate Lemma \ref{lem:csr}:
\lemcsr*
\begin{proof}
    To calculate the trace distance between $\ket{\psi_\mathrm{out}}$ and $\ket{\tilde{\psi}_\mathrm{out}}$, and the time required to give the description of $\ket{\tilde{\psi}_\mathrm{out}}$, consider this: each time we encounter a cubic phase gate, we have a superposition of $N$ coherent states as the input, with $N \leq \exp$ (by Algorithm \ref{algo:decomp}, we know that each layer of cubic phase gate increases the number of terms in the superposition by a multiplicative factor $\exp$, we have $t$ such layers, so the maximum number of terms in the superposition at the input of any given cubic phase gate is $\exp^t = \exp$, for $t = \poly$). By doing a coherent state approximation of each of the $N$ states obtained by applying the cubic phase gate to the $N$ coherent states using Lemma \ref{lem:csr_decomp_cubic}, we obtain an output superposition of $N * \exp = \exp$ coherent states which can be described in parallel time $\poly\log N*\poly=\poly$ (adding the time to describe the coherent state approximation of each of the N terms with Lemma \ref{lem:csr_decomp_cubic}). To obtain the trace distance between the input and output superpositions of each cubic phase gate, we use the following technical result \cite{vomEnde2025}:
    \begin{lem}\label{lem:superpoSrace_distance}
         Given two normalized states $\ket{\Psi} = \sum_{i=1}^{n} c_i \ket{\psi_i}$ and $\ket{\Phi} = \sum_{i=1}^{n} c_i \ket{\phi_i}$, such that the state norm $\|\ket{\psi_i} - \ket{\phi_i}\| \leq \epsilon$, $\forall i \in \{1,\dots,n\}$, then the trace distance between $\ket{\Psi}$ and $\ket{\Phi}$ satisfies $D(\ket{\Psi}\bra{\Psi},\ket{\Phi}\bra{\Phi}) \leq \epsilon\sqrt{n}$.
    \end{lem}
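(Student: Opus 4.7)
The plan is to reduce the trace distance bound to a $2$-norm bound on the state vectors, then bound the $2$-norm via the triangle inequality and Cauchy--Schwarz, using the structure of the coefficient vector $c$. The Methods section already records that for any two pure states $D(\ket\Psi\bra\Psi,\ket\Phi\bra\Phi)\le \|\ket\Psi-\ket\Phi\|$, so it suffices to show $\|\ket\Psi-\ket\Phi\|\le \epsilon\sqrt n$.

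First, I would write $\ket{\Psi}-\ket{\Phi}=\sum_{i=1}^n c_i\bigl(\ket{\psi_i}-\ket{\phi_i}\bigr)$ and set $\ket{\Delta_i}:=\ket{\psi_i}-\ket{\phi_i}$, so $\|\ket{\Delta_i}\|\le\epsilon$ by hypothesis. Expanding the squared norm gives
\begin{equation}
\|\ket\Psi-\ket\Phi\|^2=\sum_{i,j=1}^n c_i^*c_j\braket{\Delta_i|\Delta_j},
\end{equation}
and the Gram matrix $M_{ij}=\braket{\Delta_i|\Delta_j}$ is positive semidefinite with $M_{ii}\le\epsilon^2$, hence $\|M\|_{\mathrm{op}}\le\mathrm{tr}(M)\le n\epsilon^2$. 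This yields the clean operator bound $\|\ket\Psi-\ket\Phi\|^2\le \|c\|_2^2\cdot n\epsilon^2$. Alternatively, the same final inequality follows directly from the triangle inequality $\|\ket\Psi-\ket\Phi\|\le \epsilon\sum_i |c_i|$ combined with Cauchy--Schwarz $\sum_i|c_i|\le\sqrt n\,\|c\|_2$.

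The last ingredient is the bound $\|c\|_2\le 1$, which closes the argument to $\|\ket\Psi-\ket\Phi\|\le \epsilon\sqrt n$ and hence $D(\ket\Psi\bra\Psi,\ket\Phi\bra\Phi)\le\epsilon\sqrt n$. In the setting where this lemma is applied in Algorithm~\ref{algo:decomp}, the $\ket{\psi_i}$ are the tensor-product coherent states produced by propagating through the displaced passive linear unitaries, and the coefficients $c_i$ inherit $\|c\|_2\le 1$ from the construction via Lemma \ref{lem:csr_decomp_cubic} (each application only redistributes weight through a discrete Fourier-type transform followed by normalization). I would briefly verify this invariant is preserved across iterations of the algorithm.

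The main obstacle is precisely this last step: for generic non-orthogonal $\ket{\psi_i}$, the identity $\|\ket\Psi\|=1$ does not by itself force $\|c\|_2\le 1$, since the Gram matrix of the $\ket{\psi_i}$ can have small eigenvalues. Without some control on $\|c\|_2$ the bound has to be weakened to $\epsilon\,\|c\|_1$. The cleanest way to bypass this in our application is to note that the coherent state decompositions produced by Lemma \ref{lem:csr_decomp_cubic} and propagated by passive unitaries give coefficient vectors with $\|c\|_2$ uniformly bounded by $1+1/\exp$, which is absorbed into the final $\epsilon\sqrt n$ bound at negligible cost.
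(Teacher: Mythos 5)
Your core argument is the same as the paper's: bound the vector distance $\|\ket{\Psi}-\ket{\Phi}\|$ via the triangle inequality and Cauchy--Schwarz, then pass to the trace distance using $D(\ket{\Psi}\bra{\Psi},\ket{\Phi}\bra{\Phi})\le\|\ket{\Psi}-\ket{\Phi}\|$ (the paper derives this last step by splitting $\ket{\Psi}\bra{\Psi}-\ket{\Psi}\bra{\Phi}+\ket{\Psi}\bra{\Phi}-\ket{\Phi}\bra{\Phi}$ and bounding the two rank-one pieces, which is the same fact you quote from the Methods section). The issue you flag is real: the argument actually yields $\|\ket{\Psi}-\ket{\Phi}\|\le\epsilon\sqrt{n}\,\bigl(\sum_i|c_i|^2\bigr)^{1/2}$, and normalization of $\ket{\Psi}$ does not force $\sum_i|c_i|^2\le1$ when the $\ket{\psi_i}$ are non-orthogonal. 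Note that the paper's own proof silently drops the factor $\bigl(\sum_i|c_i|^2\bigr)^{1/2}$ at precisely this point, i.e., it implicitly assumes a hypothesis ($\sum_i|c_i|^2\le 1$) that is not in the lemma statement, so up to here your proposal and the printed proof stand or fall together.

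The genuine gap in your proposal is the attempted repair. The claim that the coefficient vectors produced by Lemma \ref{lem:csr_decomp_cubic} and propagated through the displaced passive linear unitaries satisfy $\|c\|_2\le 1+1/\exp$ is false: Marshall-type coherent-state decompositions are normalized through massive cancellation, not through small coefficients. For the Fock-state decomposition used later in the paper one has $|d_j|^2=e^{\xi^2}\,n!/\bigl((n+1)^2\xi^{2n}\bigr)$, so $\sum_j|d_j|^2=e^{\xi^2}\,n!/\bigl((n+1)\xi^{2n}\bigr)\gg1$ for $\xi=1/\poly$ and $n=\poly$; likewise, by discrete Parseval, the coefficients of Lemma \ref{lem:csr_decomp_cubic} satisfy $\sum_k|\tilde{c}_k|^2\propto\frac{1}{N+1}\sum_n n!\,|\tilde{a}_n|^2/\tilde{\epsilon}^{2n}$, which is $\mathcal{O}(1)$ only if the Fock amplitudes decay super-exponentially, i.e., essentially only for states already close to a coherent state of amplitude $\tilde{\epsilon}$. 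So the invariant you propose to ``briefly verify'' does not hold, and the lemma cannot be closed that way. To make the statement airtight one must either add an explicit hypothesis bounding the coefficients (with the conclusion weakening to $\epsilon\sqrt{n}\,\|c\|_2$ or $\epsilon\|c\|_1$) or prove the bound by a different argument; as written, your proof and the paper's share the same unproven step, and your application-level patch does not supply it.
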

    \noindent The proof of Lemma \ref{lem:superpoSrace_distance} is given at the end of the section. In our case, $n = \exp$, $\ket{\psi_i}$ is the state obtained by applying the cubic phase gate to the input coherent state and $\ket{\phi_i}$ refers to the approximation of this operation. Given $\epsilon = 1/\exp$ from Lemma \ref{lem:csr_decomp_cubic} and $n = \exp$ we conclude the trace distance distance between input and output superposition of coherent state for each cubic phase gate is $1/\exp$.
    
    We are making an approximation each time we encounter a cubic phase gate in the circuit and the approximated state moves away from the actual state by trace distance $1/\exp$. Since the effect of the displaced passive linear unitaries and the final Gaussian on the coherent states can be exactly computed, and unitaries do not increase the trace distance, then by triangle inequality the errors made at each of the $t$ layers add up and for $t = \poly$, $\ket{\tilde{\psi}_\mathrm{out}}$ is $t*1/\exp = 1/\exp$ close in trace distance to the actual output state $\ket{\psi_\mathrm{out}}$. The time required to give the description of $\ket{\tilde{\psi}_\mathrm{out}}$ is the time required to describe the coherent state decomposition for the $t$ cubic phase gates and the time required to describe the effect of the Gaussian $\hat{G}$ on the coherent states.
    For one such state, this can be done using $\mathcal{O}(m^3)$ arithmetic operations and $\calO(1)$ exponentials, according to \cite[Theorem III.7]{Dias2024}.
    As discussed in the proof of Lemma~\ref{lem:csr_decomp_cubic}, a precision of $1/\exp$ suffices, and we can perform the computation in parallel polynomial time as discussed in Appendix~\ref{sec:parallel}.
    This gives the total time to describe $\ket{\tilde{\psi}_\mathrm{out}}$ in parallel to be of the order $\mathcal{O}(t*\poly+m^3\poly) = \poly$.
\end{proof}

\begin{proof}[Proof of Lemma \ref{lem:superpoSrace_distance}]
Combining the triangle inequality and the Cauchy--Schwarz inequality, we show:
\begin{eqnarray}
\|\ket{\Psi} - \ket{\Phi}\| = \left\| \sum_{i=1}^n c_i (\ket{\psi_i} - \ket{\phi_i}) \right\| 
&\leq& \sum_{i=1}^n |c_i| \|\psi_i - \phi_i\| 
\leq \left( \sum_{i=1}^n |c_i|^2 \right)^{1/2} \left( \sum_{i=1}^n \|\ket{\psi_i} - \ket{\phi_i}\|^2 \right)^{1/2} \nonumber \\
&\leq& \left( \sum_{i=1}^n \epsilon^2 \right)^{1/2} = (n\epsilon^2)^{1/2} = \epsilon \sqrt{n}.
\end{eqnarray}
Then,
\begin{eqnarray}
\frac12 \| \ket{\Psi}\bra{\Psi} - \ket{\Phi}\bra{\Phi} \|_1 &=& \frac12\| \ket{\Psi}\bra{\Psi} - \ket{\Psi}\bra{\Phi} + \ket{\Psi}\bra{\Phi} - \ket{\Phi}\bra{\Phi} \|_1 \nonumber \\
&\leq&\frac12 \| \ket{\Psi}\bra{\Psi} - \ket{\Psi}\bra{\Phi} \|_1 + \frac12 \| \ket{\Psi}\bra{\Phi} - \ket{\Phi}\bra{\Phi} \|_1 \nonumber \\
&=& \frac12\| \ket{\Psi}\bra{\Psi - \Phi} \|_1 + \frac12 \| \ket{\Psi - \Phi}\bra{\Phi} \|_1 \nonumber \\
&=& \|\ket{\Psi} - \ket{\Phi}\| \leq \epsilon \sqrt{n}.
\end{eqnarray}
\end{proof}

\section{Proof of Lemma \ref{lem:prob}}\label{appendixsec:lem_prob}
\noindent We restate Lemma \ref{lem:prob}:
\lemprob*
\begin{proof}
Let $\ket{\tilde{\psi}_\mathrm{out},\tilde{n}}$ be the $m-1$ (possibly) unnormalized state obtained by projecting $\ket{\tilde{n}}$ onto $\ket{\tilde{\psi}_\mathrm{out}}$. Then,
    \begin{equation}
        \ket{\tilde{\psi}_\mathrm{out},\tilde{n}} = \braket{\tilde{n}|\tilde{\psi}_\mathrm{out}} = \sum_{j=0}^{n}\sum_{i=0}^{\exp} d_j^* c_i \braket{\beta_j|G_i} = \braket{\tilde{n}|\tilde{\psi}_\mathrm{out}} = \sum_{j=0}^{n}\sum_{i=0}^{\exp} d_j^* c_i * p(\beta_j) \braket{\beta_j|G_i}/p(\beta_j),
    \end{equation}
where $p(\beta_j) = \|\braket{\beta_j|G_i}\|^2$ can be calculated using $\mathcal{O}(1)$ operations \cite[Section III.D]{Dias2024}. From \cite[Theorem III.8]{Dias2024}, $\braket{\beta_j|G_i}/p(\beta_j)$ is an $(m-1)$-mode Gaussian state whose description can be computed using $\mathcal{O}(m^3)$ operations, of which $\calO(1)$ are exponentials.
We use the tools from Appendix~\ref{sec:parallel} to compute these operations in parallel polynomial time.
We have $n*\exp = \exp$ (for $n = \poly$) such total terms whose description needs to be computed, which can be done in parallel.
So the description of $\ket{\tilde{\psi}_\mathrm{out},\tilde{n}}$ can be given in time parallel polynomial time time.

Once we have $\ket{\tilde{\psi}_\mathrm{out},\tilde{n}}$, $\tilde{P}(n) = \|\ket{\tilde{\psi}_\mathrm{out},\tilde{n}}\|^2$ can be obtained by computing $n^2*\exp^2 = \exp$ Gaussian overlaps with $n = \poly$ and then summing them. Each overlap can be computed with $\mathcal{O}(m^3)$ operations according to \cite[Theorem III.5]{Dias2024}. Hence, we can approximate $\tilde{P}(n)$ in parallel polynomial time up to precision $1/\exp$. 
\end{proof}


\section{Proof of Lemma \ref{lem:csr_decomp_cubic_weak}}\label{appendixsec:csr_decomp_cubic_weak}
\noindent We first restate Lemma \ref{lem:csr_decomp_cubic_weak}:
\lemcsrdecompcubicweak*
\begin{proof}
   We first use the gentle measurement lemma (Lemma \ref{lem:gml}) to truncate $\ket{\psi_{\alpha,\gamma}}$ to a state with finite support size in the Fock state basis:
   \begin{equation}
       \ket{\psi_N'} = \frac{1}{\sqrt{\mathcal{N}_N'}}\sum_{i=0}^N a_n \ket{n},
   \end{equation}
\noindent with $a_n = \bra{n}e^{i\gamma \hat{q}^3}\ket{\alpha}$, $N = 4E/\epsilon^2$ and $\mathcal{N}_N' = \sum_{i=0}^N |a_n|^2$ such that
\begin{equation}
   \frac12 \|\ket{\psi_{\alpha,\gamma}}\bra{\psi_{\alpha,\gamma}} - \ket{\psi_N'} \bra{\psi_N'}\| \leq \frac{\epsilon}{2}.
\end{equation}
For $\ket{\psi_{\alpha,\gamma}} = e^{i\gamma\hat{q}^3}\ket{\alpha}$, $E$ is given by Eq.~(\ref{eqn:energy}) which is polynomial in $\alpha$ and $\gamma$. With \cite[Theorem 1]{Marshall2023} this can be approximated by a coherent state superposition with $N+1$ terms $\ket{\psi_{\alpha,\gamma}'}$, which is $\delta$-close in trace distance to $\ket{\psi_N'}$, where $\delta$ is a free parameter. Choosing $\delta = \epsilon/2$ ensures that this coherent state superposition is $\epsilon$-close in trace distance to $\ket{\psi_{\alpha,\gamma}}$, i.e.,
\begin{equation}
    \frac{1}{2} \|\ket{\psi_{\alpha,\gamma}} - \ket{\psi_{\alpha,\gamma}'}\|_1 \leq \frac{1}{2} \|\ket{\psi_{\alpha,\gamma}} - \ket{\psi_N'}\|_1 + \frac{1}{2} \|\ket{\psi_N'} - \ket{\psi_{\alpha,\gamma}'}\|_1 \leq \frac{\epsilon}{2} + \frac{\epsilon}{2} = \epsilon.
\end{equation}
Therefore, $\ket{\psi_{\alpha,\gamma}}$ has an $\epsilon$-approximate coherent state rank of at most $4E(\alpha,\gamma)/\epsilon^2 + 1$, with $E(\alpha,\gamma)$ given by Eq.~(\ref{eqn:energy}).
\end{proof}

\section{Parallel computation}\label{sec:parallel}

In this paper, we use the characterization of $\PSPACE$ as \emph{parallel polynomial time}, i.e., the class of decision problems that can be solved in $\poly(n)$ time using $\exp(n)$ parallel processors.
Let $\PTIME(t(n))$ be class of problems that can be solved in time $\calO(t(n))$ on a CREW\footnote{For the purpose of this paper, the memory model does not matter since a $p$-processor CRCW (concurrent read concurrent write) algorithm with runtime $t$ can be simulated by a $p$-processor EREW (exclusive read exclusive write) algorithm with runtime $\calO(t\log p)$.} PRAM (concurrent read exclusive write parallel random-access machine \cite{FW78,And86}), and $\DSPACE(s(n))$ the class of problems that can be solved in $\calO(s(n))$ space on a multitape DTM.
Then $\PTIME(t(n)) \subseteq \DSPACE(t^2(n))$ and $\DSPACE(s(n)) \subseteq \PTIME(s(n))$ for $s(n)\ge \log n$ \cite{FW78,And86}.
Thus, $\PTIME(\poly) = \PSPACE$.

An alternative characterization is given by $\NC$ \cite{Ruz81}, which is defined as the class of problems that can be decided by a $\log$-space uniform family of $\poly\log$-depth $\poly$-size Boolean circuits.
The scaled-up variant is then $\NCpoly$, with $\poly$-space uniform circuits of $\poly$-depth.
It holds that $\NCpoly=\PSPACE$ \cite{Bor77}.
Regarding $\NC$ as a function class, one can observe that $\NC$ and $\NCpoly$ compose well, i.e., if $f\in \NCpoly$ and $g\in \NC$, then $g\circ f\in \NCpoly$, which was also used in the $\mathsf{QIP}=\PSPACE$ proof \cite{JJUW10}.

We will require the fact that arithmetic (i.e., addition, multiplication, inverse, roots) can be performed in parallel polylogarithmic time in the bit length \cite{BCH86}.
We can then also compute the product or sum of $m$ $L$-bit values (e.g., $m!$) in time $\poly\log (m,L)$ using a tree-like computation structure.
Therefore, we can also evaluate polynomials with rational inputs efficiently:

\begin{lem}\label{lem:fast-poly-eval}
    Let $f(x) = \sum_{k=0}^d a_k x^k$ be a polynomial of degree $d$ with $a_0,\dots,a_d\in \QQ[i]$ given as inputs with numerators and denominators of at most $L$ bits.
    We can compute $f(x)$ with $x\in\QQ[i]$ of $L$ bits in parallel using $\poly(d,L)$ processors in parallel time $\poly\log(d,L)$.
\end{lem}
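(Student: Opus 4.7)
The plan is to evaluate $f(x)$ by the standard parallel scheme: first compute the powers $x^0,x^1,\ldots,x^d$, then form the products $a_k x^k$ in parallel, then sum them up with a binary-tree reduction. Each of these three stages has $\calO(\log d)$ arithmetic depth, and each individual arithmetic operation on $\QQ[i]$ elements can be performed in parallel polylogarithmic time by the result of \cite{BCH86} cited above. Composing these two parallel bounds yields the claim.

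Concretely, for the first stage I would compute the powers via a prefix-product/repeated-squaring pattern: organize the indices $0,1,\ldots,d$ in a binary tree of depth $\lceil\log_2 d\rceil$, so that $x^k$ is obtained as the product of at most $\log d$ factors of the form $x^{2^j}$, and all powers are produced in depth $\calO(\log d)$ using $\calO(d)$ multiplications in total. The second stage is a single round of $d+1$ independent multiplications $a_k\cdot x^k$. The third stage is the binary-tree summation of the $d+1$ products, again of depth $\lceil\log_2(d+1)\rceil$. Because the underlying ring is $\QQ[i]$, each operation reduces to four rational operations on the real and imaginary parts, which does not affect the asymptotics.

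The main technical point, and the one I would pay the most attention to, is controlling the bit length of intermediate values, since over $\QQ[i]$ multiplication of fractions causes both numerator and denominator bit lengths to grow additively (and summation grows them further by common-denominator cross-multiplication). A direct book-keeping shows that every intermediate quantity appearing in the three stages has numerators and denominators of bit length at most $M:=\poly(d,L)$: the power $x^k$ has bit length $\calO(kL)\le \calO(dL)$, each product $a_k x^k$ has bit length $\calO((k+1)L)\le \calO(dL)$, and each partial sum in the reduction tree has bit length $\calO(dL + d\log d) = \poly(d,L)$. Thus every arithmetic operation invoked is on $M$-bit inputs with $M=\poly(d,L)$.

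Finally, applying \cite{BCH86} to each individual addition, multiplication, and (for normalizing fractions) division on inputs of bit length $M$ gives a parallel runtime of $\poly\log M = \poly\log(d,L)$ per operation using $\poly(M)=\poly(d,L)$ processors. Multiplying this by the $\calO(\log d)$ depth of the overall scheme yields a total parallel time of $\poly\log(d,L)$, while the total processor count, bounded by the maximum number of operations in any stage times the per-operation processor count, remains $\poly(d,L)$. This is precisely the bound stated in the lemma. No step seems to present a substantive obstacle beyond the bit-length accounting sketched above.
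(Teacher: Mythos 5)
Your proposal is correct and follows essentially the same route the paper takes: the paper derives this lemma directly from the fact that arithmetic over the rationals is in parallel polylogarithmic time \cite{BCH86} together with a tree-structured computation of the products and the sum, which is exactly your three-stage scheme (your explicit bit-length bookkeeping is implicit in the paper's argument, and although your stated bound for the partial sums slightly undercounts the denominator growth, every intermediate value is indeed of bit length $\poly(d,L)$, so the conclusion stands).
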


In our computations, we work with approximate values, truncated to an exponential number of bits in the input size (i.e., doubly exponentially small error).
Since $d$ will be at most exponential in the input size, we achieve sufficient accuracy while keeping the parallel runtime polynomial.

Our computations also compute roots, and inverses, which can be reduced to polynomial root isolation, which is in $\NC$ \cite{Nef94}.
As it turns out, even solutions to polynomial systems with $\calO(1)$ variables, including inequations and quantifiers, can be approximated in $\NC$ with exponential precision \cite{Ren92-4}.
Finally, we show how to approximate the exponential function and the Gamma function:

\begin{lem}\label{lem:exp}
    Let $z \in \CC$ be given as binary using $L$ bits.
    We can compute $y$ with $|y-e^z| \le \epsilon$ using $(L+|\log\epsilon|+|z|)^{\calO(1)}$ processors in time $\log^{\calO(1)}(L+|\log\epsilon|+|z|)$.
\end{lem}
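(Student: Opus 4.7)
The plan is to approximate $e^z$ by truncating its Taylor series $e^z = \sum_{k=0}^\infty z^k/k!$ at a sufficiently large degree $N$ and evaluating the resulting polynomial in parallel using Lemma~\ref{lem:fast-poly-eval}. For $N \geq 2\lceil |z|\rceil$ the ratio of consecutive terms in the series is at most $1/2$, and Stirling gives
\begin{equation}
    \left|\sum_{k>N} \frac{z^k}{k!}\right| \leq 2\,\frac{|z|^{N+1}}{(N+1)!} \leq 2\left(\frac{e|z|}{N+1}\right)^{N+1}.
\end{equation}
Choosing $N = \calO(|z| + |\log\epsilon|)$ with a suitable constant forces this tail to be at most $\epsilon/2$; crucially, $N$ remains polynomial in the resource parameter $L+|\log\epsilon|+|z|$.

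Next, I compute the $N+1$ coefficients $1/k!$ in parallel. Each factorial $k!$ is obtained in polylogarithmic depth as a tree-structured product of $k$ integers via the fast integer arithmetic of \cite{BCH86}, and inversion with rational inputs is also in $\NC$. I carry all coefficients to a working precision of $L' = \poly(L+|\log\epsilon|+|z|)$ bits, chosen large enough to absorb the round-off contributions discussed below. Applying Lemma~\ref{lem:fast-poly-eval} to the rational polynomial $\tilde p(x) = \sum_{k=0}^N \tilde c_k x^k$ with $\tilde c_k \approx 1/k!$ at $x=z$ then produces the estimate $y := \tilde p(z)$ in parallel time $\log^{\calO(1)}(L+|\log\epsilon|+|z|)$ using $(L+|\log\epsilon|+|z|)^{\calO(1)}$ processors, matching the bounds claimed by the lemma.

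A routine error analysis decomposes $|y - e^z|$ into the Taylor-truncation error (at most $\epsilon/2$ by the choice of $N$), the coefficient round-off, and the polynomial-evaluation round-off. The main obstacle is controlling the coupling of magnitudes in the last two contributions: the term $z^k$ can reach $2^{\calO(N\log|z|)}$ while $1/k!$ shrinks correspondingly, so intermediate sums must be stored with enough bits to preserve the final cancellation between large summands. This forces $L'$ to be taken proportional to $N$ plus the input precision, but since $N$ is already polynomial in $L+|\log\epsilon|+|z|$, so is $L'$, and the two round-off terms can each be kept below $\epsilon/4$, yielding $|y - e^z|\leq \epsilon$. An optional safeguard would be to pre-reduce the argument by writing $e^z = 2^q\,e^{z'}$ with $z'$ of bounded modulus (costing only a bit-shift and a precomputation of $\ln 2$ to $L'$ bits, which is in $\NC$), but this is not strictly necessary given the resource budget.
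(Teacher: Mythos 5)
Your proposal is correct and follows essentially the same route as the paper: truncate the Taylor series at $N = \calO(|z| + |\log\epsilon|)$ using the same style of tail bound, then evaluate the resulting degree-$N$ polynomial via Lemma~\ref{lem:fast-poly-eval}, observing that the factorials/coefficients have only polynomially many bits so the parallel resources stay within $(L+|\log\epsilon|+|z|)^{\calO(1)}$ processors and polylogarithmic depth. The extra discussion of working precision and optional argument reduction is a harmless elaboration of the same argument.
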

\begin{proof}
    We approximate $e^z$ by truncating the Taylor series.
    Let $S_N(z) = \sum_{k=0}^{N-1}\frac{z^k}{k!}$ and $R_N(z) = \sum_{k=N}^{\infty}\frac{z^k}{k!}$.
    We want to choose $N$, such that $|R_N(z)| \le \epsilon$. For $N \ge 2e|z|$, we have
    \begin{equation}
        |R_N(z)| \le \sum_{k=N}^\infty \frac{|z|^k}{k!} \le \frac{|z|^N}{N!}\sum_{k=0}^\infty\left(\frac{|z|}{N+1}\right)^k  \le 2\frac{|z|^N}{N!}\le 2\frac{|z|^N}{\sqrt{2\pi N}(N/e)^N}\le 2^{-N+1}.
    \end{equation}
    Thus $N = \calO(|\log \epsilon|+|z|)$ suffices.
    We can compute $S_N(z)$ with Lemma~\ref{lem:fast-poly-eval} to achieve the desired parallel runtime, noting that $N!$ has $N^{\calO(1)}$-bits.
\end{proof}

\begin{lem}\label{lem:gamma}
    We can compute $m$ bits of $\Gamma(n/24)$ in parallel time $(\log mn)^{\calO(1)}$.
\end{lem}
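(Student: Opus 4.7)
The plan is to reduce the computation of $\Gamma(n/24)$ to the computation of $\Gamma$ at one of 24 fixed rational arguments, times a single long product of rationals, and then show both of these are in parallel polylogarithmic time. First, I would write $n = 24k+j$ with $j\in\{1,\dots,24\}$ and $k\in\N_0$, so that $n/24 = k + r$ with $r := j/24 \in (0,1]$. Applying the recurrence $\Gamma(x+1)=x\Gamma(x)$ exactly $k$ times gives
\begin{equation}
    \Gamma(n/24) \;=\; \Gamma(r) \cdot \prod_{i=0}^{k-1}(r+i).
\end{equation}
Since $r$ ranges over only $24$ values, it suffices to exhibit an $\NC$ algorithm for each base value $\Gamma(r)$, together with an $\NC$ algorithm for the finite product.

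The finite product contains $k = O(n)$ factors, each a rational with $O(\log n)$-bit numerator and bounded denominator. I would evaluate it using a balanced binary multiplication tree of depth $O(\log k)$; the values at each internal node are integers (after clearing denominators) of bit length at most $P := \poly(n,m)$, and multiplication of $P$-bit integers lies in $\NC$ by \cite{BCH86}. Rounding the final exact value to $m$ bits of precision is then a single $\NC$ division. Total parallel depth is $(\log mn)^{\calO(1)}$.

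For each of the 24 base values $r\in\{1/24,\dots,1\}$, I would use \emph{shifted Stirling}: pick a shift $K=\Theta(m)$, evaluate $\log\Gamma(r+K)$ via the asymptotic expansion
\begin{equation}
    \log\Gamma(z) = (z-\tfrac12)\log z - z + \tfrac12\log(2\pi) + \sum_{j=1}^{N}\frac{B_{2j}}{2j(2j-1)\,z^{2j-1}}
\end{equation}
at $z=r+K$, truncated at the optimal $N=\Theta(K)$ where the tail is $2^{-\Omega(K)}$, and then recover $\Gamma(r) = \Gamma(r+K)\big/\prod_{i=0}^{K-1}(r+i)$. The $N$ summands can be computed independently in parallel; the Bernoulli numbers $B_{2j}$ have rational closed forms computable in $\NC$ from their standard recurrence; the $\log$ and $\exp$ conversions needed to move between $\log\Gamma$ and $\Gamma$ lie in $\NC$ by the same fast Taylor-series argument as in Lemma \ref{lem:exp}; and the denominator product is again handled by a balanced multiplication tree. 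All of this runs in parallel time $\poly\log m$.

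The main quantitative obstacle is precision bookkeeping rather than any single new algorithmic idea. The intermediate product $\prod_{i=0}^{k-1}(r+i)$ has magnitude of order $(n/24)!$ with $\tilde\Omega(n)$ significant bits, while Stirling's series demands $\Omega(m)$ bits of working precision in the arithmetic on $r+K$, so all intermediate values must be carried in a $P=\poly(m,\log n)$-bit fixed-point format. The key check is that rounding errors multiply by at most $\poly(k)$ through the product tree and by at most $\poly(N)$ through Stirling's sum, so that a uniform $O(\log(mn))$ bits of guard precision ensure the final $m$-bit accuracy. Given the $\NC$ arithmetic primitives already invoked in Appendix~\ref{sec:parallel}, this propagation is routine but is where all of the quantitative work actually sits.
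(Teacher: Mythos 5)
Your proposal is correct in outline but takes a genuinely different route from the paper for the crucial base-case evaluation. Both arguments share the first step: peel off the integer part with $\Gamma(x+1)=x\Gamma(x)$ and evaluate the resulting rational product $\prod_{i=0}^{k-1}(r+i)$ by a balanced multiplication tree using the $\NC$ arithmetic of \cite{BCH86}. For the remaining constant $\Gamma(r)$ with $r=j/24$, the paper simply invokes the AGM-based fast evaluation of $\Gamma(1/3)$ \cite{borwein1992fast,BB87} ($\calO(\log m)$ operations for $m$ digits) and remarks that Spouge's approximation \cite{Spo94} covers general rational arguments, whereas you give a self-contained construction via a shifted Stirling expansion at $z=r+K$ with $K=\Theta(m)$, truncated at the optimal term, undoing the shift by another $\NC$ product. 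Your route treats all $24$ residues uniformly and avoids relying on the special elliptic-integral structure behind the AGM evaluations, at the cost of more machinery (Bernoulli numbers, $\pi$, $\log$, and the error analysis of a divergent asymptotic series, which is fine here since for real $z>0$ the truncation error is bounded by the first omitted term); the paper's route is shorter but leans entirely on external results. One justification in your write-up needs repair: the standard recurrence $\sum_{k=0}^{n}\binom{n+1}{k}B_k=0$ is a full-history recurrence and is not obviously parallelizable to polylogarithmic depth, so ``computable in $\NC$ from their standard recurrence'' does not stand as written; instead use an explicit closed form, e.g.\ $B_{2j}=\sum_{k=0}^{2j}\frac{1}{k+1}\sum_{v=0}^{k}(-1)^v\binom{k}{v}v^{2j}$, or $B_{2j}=(-1)^{j+1}\,2\,(2j)!\,\zeta(2j)/(2\pi)^{2j}$ with a truncated zeta series, both evaluable in $\NC$. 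With that fix, and noting that $\pi$ and $\log$ to $\Theta(m)$ bits are likewise obtainable in $\NC$ by truncated series exactly as in Lemma~\ref{lem:exp}, your precision bookkeeping goes through and the lemma follows.
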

\begin{proof}
    Using the recurrence relation $\Gamma(z+1)=z\Gamma(z)$, we can write
    \begin{equation}
        \Gamma(n+1/3) = (n-2/3)(n-5/3)\cdots\Gamma(1/3).
    \end{equation}
    We can efficiently compute the rational part following the above discussions, it only remains to approximate $\Gamma(1/3)$.
    It is shown in \cite{borwein1992fast} that $\calO(\log m)$ operations suffice to compute $m$ digits of $\Gamma(1/3)$ using Gaussian Arithmetic-Geometric Mean Iteration (AGM) \cite{BB87}.

    We remark that a more general result may be obtained via Spouge's approximation \cite{Spo94}.
\end{proof}

\end{document}